\documentclass[letterpaper, 10 pt, conference]{ieeeconf}

\IEEEoverridecommandlockouts 
\overrideIEEEmargins               

\title{Transient performance of MPC for tracking without terminal constraints}
\author{Nadine Ehmann, Matthias K\"ohler, and Frank Allg\"ower
\thanks{F. Allg\"ower is thankful that this work was funded by the Deutsche Forschungsgemeinschaft (DFG, German Research Foundation) under Germany’s Excellence Strategy -- EXC 2075 -- 390740016.}
\thanks{Nadine Ehmann, Matthias K\"ohler and Frank Allg\"ower are with the University of Stuttgart, Institute for Systems Theory and Automatic Control, Germany \{nadine.ehmann, matthias.koehler, frank.allgower\}@ist.uni-stuttgart.de}}

\usepackage{cite}
\usepackage{amsmath,amssymb,amsfonts}
\usepackage{algorithmic}
\usepackage{graphicx}
\usepackage{hyperref}
\hypersetup{hidelinks=true}
\usepackage{siunitx}
\usepackage{pgfplots}
\usepackage{mathtools}
\usepackage{subcaption}
\usepackage{float}

\newtheorem{theorem}{Theorem}
\newtheorem{assumption}{Assumption}

\newtheorem{lemma}{Lemma}

\newtheorem{corollary}{Corollary}
\newtheorem{proposition}{Proposition}

% Define argmin
\DeclareMathOperator*{\argmin}{arg\,min}
\DeclareMathOperator*{\arginf}{arg\,inf}

\pagestyle{empty}

\def\BibTeX{{\rm B\kern-.05em{\sc i\kern-.025em b}\kern-.08em
		T\kern-.1667em\lower.7ex\hbox{E}\kern-.125emX}}

% Restore IEEEtran's ability to automatically break within multiline equations.
\interdisplaylinepenalty=2500
\setlength{\abovedisplayshortskip}{0pt plus3pt}
\setlength{\abovedisplayskip}{1ex plus3pt minus1pt}
\setlength{\belowdisplayshortskip}{1ex plus3pt minus1pt}
\setlength{\belowdisplayskip}{1ex plus3pt minus1pt}

% IEEE copyright notice
\newcommand\copyrighttext{%
	\footnotesize 
	\textcopyright 2025 IEEE. Personal use of this material is permitted. Permission from IEEE must be obtained for all other uses, in any current or future 
	media, including reprinting/republishing this material for advertising or promotional purposes, creating new collective works, for resale or redistribution to servers or lists, or reuse of any copyrighted component of this work in other works. DOI: \href{https://www.doi.org/10.1109/LCSYS.2025.3585945}{10.1109/LCSYS.2025.3585945}, N. Ehmann, M. Köhler, and F. Allgöwer, "Transient Performance of MPC for Tracking Without Terminal Constraints," \emph{IEEE Control Syst. Lett.}, vol. 9, pp. 2049-2054, 2025.
}
	
\newcommand\copyrightnotice{%
	\begin{tikzpicture}[remember picture,overlay]
		\node[anchor=south,yshift=10pt] at (current page.south) {\fbox{\parbox{\dimexpr\textwidth-\fboxsep-\fboxrule\relax}{\copyrighttext}}};
	\end{tikzpicture}%
}

\begin{document}

\maketitle
\copyrightnotice
\vspace*{-0.85\baselineskip}
\thispagestyle{empty}

\begin{abstract}
	Model predictive control (MPC) for tracking is a recently introduced approach, which extends standard MPC formulations by incorporating an artificial reference as an additional optimization variable, in order to track external and potentially time-varying references. In this work, we analyze the performance of such an MPC for tracking scheme without a terminal cost and terminal constraints. We derive a transient performance estimate, i.e.\ a bound on the closed-loop performance over an arbitrary time interval, yielding insights on how to select the scheme's parameters for performance. Furthermore, we show that in the asymptotic case, where the prediction horizon and observed time interval tend to infinity, the closed-loop solution of MPC for tracking recovers the infinite horizon optimal solution.
\end{abstract}

\section{Introduction}

Model predictive control (MPC) is a powerful control method that can achieve stability and constraint satisfaction for nonlinear systems based on repeatedly solving a finite horizon optimization problem including performance criteria and only applying the first part of the optimal input, see~\cite{Gruene2017} for an overview. An important control goal of MPC is tracking of time-varying references. In this context, MPC is applied, e.g. in process control, autonomous driving and robotics.

\emph{MPC for tracking}, first introduced in~\cite{Limon2008}, is a variant of MPC designed for this control goal. In MPC for tracking, an artificial reference is included into the problem as an additional optimization variable. The major advantage is that recursive feasibility and stability are ensured independent of the externally provided reference. Moreover, if this reference is inadmissible, e.g. not reachable or not an equilibrium of the system, then the closed loop automatically converges to the best reachable reference instead. Furthermore, it provides a large region of attraction and allows for possibly shorter prediction horizons. Following the work in~\cite{Limon2008}, MPC for tracking has been extended e.g. to nonlinear systems and output tracking~\cite{Limon2018}, periodic trajectories~\cite{Koehler2020b}, economic stage costs and distributed systems, see e.g.~\cite{Krupa2024,Koehler2024} for a summary. One challenge in MPC for tracking, however, is that terminal ingredients are needed for every possible artificial reference, which can be done e.g. by using reference generic terminal ingredients~\cite{Koehler2020a}. To avoid such a potentially complex offline design, a further variant was introduced in~\cite{Soloperto2023}, where MPC for tracking without terminal cost and constraints is considered, but no insight on how to choose the parameters for good performance is provided.

Regarding performance of such MPC for tracking schemes there are only a few results available. Local optimality is shown for linear~\cite{Ferramosca2009,Ferramosca2011} and nonlinear~\cite{Limon2018} systems, when using specific offset cost functions. Transient and asymptotic performance results for schemes with terminal ingredients are presented in~\cite{Koehler2023}. Transient performance estimates bound the closed-loop performance, measured by the stage cost, over a finite-time window. So far, there are no performance bounds available for MPC for tracking without terminal ingredients.

In this work, we close this gap and provide a bound for the closed-loop performance over a transient time interval. We show that in the asymptotic case, as the prediction horizon and the considered time interval tend to infinity, the infinite horizon optimal solution is recovered. Our analysis also gives new insights into how to choose the parameters of the MPC for tracking scheme, namely the prediction horizon and a parameter representing an upper bound on the tracking cost, which is included as a constraint. In particular, we demonstrate that choosing this bound too small leads to poor performance despite large prediction horizons. However, we prove that choosing this parameter and the prediction horizon sufficiently large recovers good performance. Furthermore, the importance of scaling the offset cost with the prediction horizon is shown. The results are illustrated in simulation with the example of a continuous stirred-tank reactor.

\emph{Notation:}
We denote the natural numbers (including $0$) by $\mathbb{N}$ ($\mathbb{N}_0$) and all nonnegative real numbers by $\mathbb{R}_{\geq 0}$. We use $\mathbb{I}_{a:b}$ for the set of integers from $a$ to $b$ with $a\leq b$. With $\vert v \vert_{\mathcal{V}'} = \min_{v'\in\mathcal{V}'} \Vert v - v' \Vert_{2}$ we denote the Euclidean distance of a vector $v \in \mathcal{V}$ to the set $\mathcal{V}'$, where $\mathcal{V}' \subseteq \mathcal{V}$ is a closed subset and $\mathcal{V}$ a normed space. We use $\vert v \vert_{v'}$ if $\mathcal{V}' = \{v'\}$ for some $v' \in  \mathcal{V}$. The interior of a set $\mathcal{A}$ is $\mathrm{int}\, \mathcal{A}$ and define $\mathcal{B}_{c}(\tilde{x}) = \left\lbrace x \in \mathbb{R}^n \mid \Vert x - \tilde{x} \Vert_{2} \leq c \right\rbrace$. We use comparison functions, e.g. $\mathcal{K}_\infty$, $\mathcal{L}$, see \cite{Kellett2014}.

\section{Problem formulation}

We consider a nonlinear, time-invariant, discrete-time system $x(t+1)=f(x(t), u(t))$ with state $x(t)\in X\subset\mathbb{R}^n$, control input $u(t)\in U\subset\mathbb{R}^m$, time $t\in\mathbb{N}_0$, continuous system dynamics $f:\mathbb{R}^n\times\mathbb{R}^m\to\mathbb{R}^n$ and with $X$ bounded and $U$ compact. The system is subject to pointwise-in-time state and input constraints $(x(t), u(t))\in\mathbb{Z}\subseteq\mathbb{R}^{n\times m}$, where $\mathbb{Z}\subseteq X\times U$ is a compact constraint set. The solution, when starting at a given initial state $x\in X$ and applying the input sequence $u=(u(0), u(1), \dots)\in U^K$, is denoted by $x_u(k, x)$ with $k\in\mathbb{I}_{0:K}$. When the initial state $x$ is obvious, we use $x_u(k)=x_u(k, x)$. If $(x_u(k, x), u(k))\in\mathbb{Z}$, $k\in\mathbb{I}_{0:K-1}$ and $x_u(K, x)\in X$ for $u\in U^K$, then this $u$ is called admissible, while the set of all admissible input sequences is denoted by $\mathbb{U}^K(x)$. The set $\mathbb{U}^K_{\mathbb{X}}(x)$ with some closed set $\mathbb{X}$ additionally requires that $x_u(K, x)\in\mathbb{X}$ for $K\in\mathbb{N}$ or $\lim_{K\to\infty}\vert x_u(K, x)\vert_{\mathbb{X}}=0$ for $K=\infty$. The set of admissible references is given by $\mathbb{S}=\left\lbrace(x, u)\in\mathbb{Z}_r \,\vert\, x=f(x, u)\right\rbrace$ with a user-chosen closed set $\mathbb{Z}_r\subseteq\mathrm{int}\, \mathbb{Z}$. Define $\vert r\vert_{\hat{r}} = \sqrt{\vert x_r\vert_{x_{\hat{r}}}^2+\vert u_r\vert_{u_{\hat{r}}}^2}$ to compare different references $r=(x_r, u_r)$, $\hat{r}=(\hat{x}_r, \hat{u}_r)\in\mathbb{S}$.

The control goal is to stabilize the system and steer it towards a given external reference $r_\mathrm{e}=(x_\mathrm{e}, u_\mathrm{e})$ while minimizing a performance objective given by the stage cost $\ell:X\times U\times\mathbb{S}\to\mathbb{R}_{\geq 0}$. Also, the constraints should be satisfied. If $r_\mathrm{e}\notin\mathbb{S}$, i.e.\ $r_\mathrm{e}$ is not an admissible equilibrium of the system or $r_\mathrm{e}$ is not reachable, the control goal is to steer the system as close as admissible towards the external reference $r_\mathrm{e}$. With an offset cost function $T:\mathbb{S}\to\mathbb{R}_{\geq 0}$, which has its minimum at $r_\mathrm{e}$, the goal is formulated as follows: minimize $T$, i.e.\ steer the system to the best reachable reference $r_\mathrm{d}=(x_\mathrm{d}, u_\mathrm{d})=\argmin_{r\in\mathbb{S}}T(r)$. As in~\cite{Koehler2023}, we assume that $r_\mathrm{d}$ is unique and that the offset cost is an adequate measure for the distance from the considered reference to the best reachable reference $\vert r \vert_{r_{\mathrm{d}}}$:
\begin{assumption}\label{asm:offset_cost_indication}
	There exist $\alpha_{\mathrm{lo}}^T,\,\alpha_{\mathrm{up}}^T \in \mathcal{K}_{\infty}$ such that for all $r\in\mathbb{S}$ (not necessarily for all equilibria, e.g.\ $r_{\mathrm{e}}$)
	\begin{equation}\label{eq:offset_cost_indication}
		\alpha_{\mathrm{lo}}^T(\vert r \vert_{r_{\mathrm{d}}}) \le T(r) \le \alpha_{\mathrm{up}}^T(\vert r \vert_{r_{\mathrm{d}}}).
	\end{equation}
\end{assumption}
\vspace*{\belowdisplayskip}

This assumption also implies $T(r_\mathrm{d})=0$. In our setting, \eqref{eq:offset_cost_indication}~is satisfied if e.g. $T(r) \hspace{-1pt}=\hspace{-1pt} \Vert r - r_{\mathrm{e}} \Vert^2_S - \bar{T}$ with $S\hspace{-1pt}\succ\hspace{-1pt} 0$ and $\bar{T} = \Vert r_{\mathrm{d}} - r_{\mathrm{e}} \Vert^2_S$ is chosen, because then $T(r_\mathrm{d})=0$ and $T(r)>0$ for all $r\neq r_\mathrm{d}\in \mathbb{S}$. Note that it is not required to know $r_\mathrm{d}$ and $\bar{T}$, and that assuming $T(r_\mathrm{d})=0$ is no restriction, since the MPC optimization problem remains unchanged when adding constants, compare~\cite{Koehler2023}.

Furthermore, the stage cost defines a performance measure, hence, the scheme should exhibit good performance. This is measured by
\begin{equation}\label{eq:performance_measure}
	J_K^\mathrm{d}(x, u)=\textstyle\sum_{k=0}^{K-1}\ell(x_u(k, x), u(k), r_\mathrm{d})
\end{equation} for input sequences $u$, where we are interested in estimating and comparing the closed-loop performance, i.e.~\eqref{eq:performance_measure} for the closed-loop input of MPC for tracking.

\section{MPC for tracking}

In this section, we present the MPC for tracking problem and show that the resulting closed loop is exponentially stable, ensuring convergence to the best reachable steady state.

The tracking stage cost $\ell(x, u, r)$ should fulfill the following assumptions, cf.~\cite[Assm.~2-4]{Koehler2023}:
\begin{assumption}\label{asm:stage_cost_lower_and_upper_bound}
	There exist $c_1^{\ell},\,c_2^{\ell}>0$ such that for all $(x, u)\in\mathbb{Z}$ and $r=(x_r, u_r)\in\mathbb{S}$
	\begin{equation}\label{eq:stage_cost_lower_and_upper_bound}
		c_1^{\ell} \vert x \vert_{x_r}^2 \le \ell^* (x, r) \le c_2^{\ell} \vert x \vert_{x_r}^2
	\end{equation}
	with $\ell^* (x, r) = \min_{u \in U\,\text{s.t.}\,(x,u)\in\mathbb{Z}}\ell(x, u, r)$.
\end{assumption}
\begin{assumption}\label{asm:stage_cost_difference_bound}
	There exist $c_3^{\ell},\,c_4^{\ell}>0$ such that for any $r_{1},\,r_{2} \in\mathbb{S}$ and $(x, u) \in \mathbb{Z}$
	\begin{equation}\label{eq:stage_cost_difference_bound}
		\ell(x, u, r_{1}) \le c_3^{\ell} \ell(x, u, r_{2})+ c_4^{\ell} \vert r_{1} \vert_{r_{2}}^2.
	\end{equation}
\end{assumption}
\vspace*{\belowdisplayskip}
\begin{assumption}\label{asm:stage_cost_difference_bound_with_linear_term}
	There exist $c_5^{\ell},\,c_6^{\ell}>0$ such that for any $r_{1},\,r_{2} \in\mathbb{S}$ and $(x, u) \in \mathbb{Z}$
	\begin{equation}\label{eq:stage_cost_difference_bound_with_linear_term}
		\ell(x, u, r_{1}) \le \ell(x, u, r_{2}) + c_5^{\ell} \vert r_{1} \vert_{r_{2}}^2 + c_6^{\ell} \vert r_{1} \vert_{r_{2}}.
	\end{equation}
\end{assumption}
\vspace*{\belowdisplayskip}

Assumption~\ref{asm:stage_cost_lower_and_upper_bound} is standard in stabilizing MPC, see e.g.\ \cite[Assm.~3.2]{Gruene2012} or \cite[Assm.~2]{Koehler2024}, and states that the stage cost $\ell$ should reflect the distance between the considered state and (artificial) reference state. Stage costs with respect to different references can be compared with Assumptions~\ref{asm:stage_cost_difference_bound} and \ref{asm:stage_cost_difference_bound_with_linear_term}. The latter gives a direct relation without the additional factor $c_3^\ell$ and includes the linear term $c_6^\ell\vert r_{1}\vert_{r_2}$. These assumptions are satisfied, e.g.\ for the (standard) quadratic stage cost $\ell(x, u, r)=(x-x_r)^\top Q(x-x_r)+(u-u_r)^\top R(u-u_r)$ with $Q\hspace{-1pt}\succ\hspace{-1pt}0$, $R\hspace{-1pt}\succeq\hspace{-1pt}0$ on bounded constraint sets, cf.~\cite{Soloperto2023},~\cite{Koehler2023}.

For comparison and notation, we also briefly introduce the standard MPC problem without terminal ingredients with respect to a fixed reference $r\in\mathbb{S}$. This is given by \begin{subequations} \label{eq:standard_mpc_problem}
	\begin{align}
		&\mathcal{J}_N^\mathrm{s}(x, r) = \min_{u} J_N(x, u, r)\\
		&\text{s.t.}\,(x_u(k, x), u(t))\in\mathbb{Z},\;\,k\in\mathbb{I}_{0:N-1}
	\end{align}
\end{subequations} with a prediction horizon $N\in\mathbb{N}_0$ and the tracking cost \begin{equation}\label{eq:tracking_cost}
	J_N(x, u, r)=\textstyle\sum_{k=0}^{N-1} \ell( x_u(k, x), u(k), r).
\end{equation} The solution of~\eqref{eq:standard_mpc_problem} is the optimal input sequence $u_N^\mathrm{s}(\cdot\vert t)$, where we abbreviate $u_N^\mathrm{s}(\cdot\vert t)=u_N^\mathrm{s}(\cdot\vert x(t))$. The set of all states such that~\eqref{eq:standard_mpc_problem} is feasible is $\mathcal{X}_N^\mathrm{s}$.

We assume local exponential cost controllability of the system as follows. \begin{assumption}\label{asm:local_exponential_cost_controllability}
	There exist $\sigma>0$, $\gamma\geq 1$, such that for any $r\in\mathbb{S}$, $N\in\mathbb{N}$, and any $x\in X$ satisfying $\ell^*(x, r)\leq\sigma$, \begin{equation}\label{eq:local_exponential_cost_controllability}
		\mathcal{J}_N^\mathrm{s}(x, r)\leq\gamma\ell^*(x, r).
	\end{equation}
\end{assumption} 
\vspace*{\belowdisplayskip}

This assumption is standard for MPC without terminal cost and constraints, cf.~\cite[Assm.~2]{Soloperto2023}, \cite[Assm.~14]{Koehler2024}, \cite[Assm.~1]{Boccia2014} or \cite[Assm.~3.5]{Gruene2012}. In addition, it holds, e.g. if terminal cost and constraints could be calculated~\cite{Soloperto2023}.  If also $\mathcal{J}_N^\mathrm{s}(x, r)\leq\eta$ in addition to Assumption~\ref{asm:local_exponential_cost_controllability}, we can combine these conditions to $\mathcal{J}_N^\mathrm{s}(x, r)\leq \max\lbrace\gamma, \frac{\eta}{\sigma}\rbrace\ell^*(x, r)$ for all $x$ with $\mathcal{J}_N^\mathrm{s}(x, r)\leq\eta$, see~\cite{Soloperto2023,Boccia2014}. We choose $\eta\geq\gamma\sigma$, implying $\mathcal{J}_N^\mathrm{s}(x, r)\leq \frac{\eta}{\sigma}\ell^*(x, r)$.

To show stability, we use that if $r\neq r_\mathrm{d}$, there~always is another reference $\hat{r}$ that is close to the current artificial reference but has a smaller offset cost. Then, the artificial reference, that the closed loop is tracking, can be moved towards the best reachable reference. The following assumption, cf.~\cite[Assm.~7]{Koehler2023}, \cite[Assm.~3]{Soloperto2023}, ensures this.
\begin{assumption}\label{asm:better_candidate_reference}
	There exist $c_{1}^\mathrm{r},\,c_{2}^\mathrm{r} > 0$ such that for any $r=(x_r, u_r)\in\mathbb{S}$, and any $\theta \in [0, 1]$, there exists $\hat{r}=(x_{\hat{r}}, u_{\hat{r}})\in\mathbb{S}$ with
	\vspace{-3pt}\begin{align}
		\vert\hat{r}\vert_{r} &\le c_{1}^\mathrm{r}\theta\vert r\vert_{r_{{\mathrm{d}}}},\label{eq:candidate_reference_upper_bound}\\
		T(\hat{r})-T(r) &\le -c_{2}^\mathrm{r}\theta\vert r\vert_{r_{{\mathrm{d}}}}^2.\label{eq:candidate_reference_decrease}
	\end{align}
\end{assumption}
\vspace*{\belowdisplayskip}

Assumption~\ref{asm:better_candidate_reference} is satisfied, e.g.\ for a (strongly) convex offset cost, a convex set of references, and a uniqueness condition, which is often assumed in MPC for tracking, see e.g.~\cite[Assm.~1-2]{Limon2018} and \cite[Assm.~6]{Koehler2020b}.

In the MPC for tracking scheme the following optimization problem is solved at each time $t$ for a state $x$
\vspace{-1pt}\begin{subequations}\label{eq:MPC_problem}
	\begin{align}
		&H_{N,\eta}^*(x)=\min_{u, r} J_N(x, u, r) +\lambda(N)T(r) \label{eq:MPC_problem_optcost}\\[-2pt]
		&\mathrm{s.t.}\;r\in\mathbb{S},\\[-2pt]
		&\phantom{\mathrm{s.t.}\;}J_N(x, u, r) \leq \eta,\label{eq:MPC_problem_costconstraint}\\[-2pt]
		&\phantom{\mathrm{s.t.}\;}(x_{u}(k, x), u(k)) \in\mathbb{Z},\;\,k\in\mathbb{I}_{0:N-1}
	\end{align}
\end{subequations} with the tracking cost~\eqref{eq:tracking_cost} and prediction horizon $N\in\mathbb{N}_0$. The solution of~\eqref{eq:MPC_problem} is the optimal input trajectory $u_{N,\eta}^*(\cdot\vert t)$ together with the optimal artificial reference $r_{N,\eta}^*(x)=( x_{r_{N,\eta}^*(x)}, u_{r_{N,\eta}^*(x)})$. $\mathcal{X}_{N,\eta}$ denotes the set of states for which problem \eqref{eq:MPC_problem} is feasible. This MPC for tracking problem is similar to the one in~\cite{Soloperto2023}, where~\eqref{eq:MPC_problem_costconstraint} with the cost parameter $\eta$ ensures the evolution of the state inside the region of attraction of the chosen $r$. Together with $N$ it will also be relevant for performance. The key difference is that the offset cost is scaled by a function $\lambda:\mathbb{N}_0 \to \mathbb{R}_{\ge 0}$, depending on the prediction horizon. \begin{assumption}\label{asm:scaling}
	The scaling function $\lambda$ in~\eqref{eq:MPC_problem_optcost} satisfies $\lambda(0) \geq 1$ and $\lambda(N)\to\infty$ for $N\to\infty$.
\end{assumption} 

As shown below, it is essential for performance that the scaling grows unbounded (cf.~Section~\ref{sec:example}).

Note that if $r_{N,\eta}^*(x)$ is fixed, $\mathcal{J}_N^\mathrm{s}(x, r_{N,\eta}^*(x))=J_N(x, u_{N,\eta}^*(\cdot\vert t), r_{N,\eta}^*(x))$ and $u_N^\mathrm{s}(\cdot\vert t)=u_{N,\eta}^*(\cdot\vert t)$, and similar for candidate artificial references.

The closed-loop system \begin{equation*}
	x(t+1)=f(x(t), \mu_{N,\eta}( x(t))) 
\end{equation*} results by applying $\mu_{N,\eta}(x(t))=u_{N,\eta}^*(0\vert t)$ after solving~\eqref{eq:MPC_problem} for $x=x(t)$ at time step $t$. The following theorem ensures exponential stability. In this context, the parameter $\eta$ in~\eqref{eq:MPC_problem_costconstraint} influences the size of the feasible set $\mathcal{X}_{N,\eta}$, but it also affects the required prediction horizon length $N>N_\eta=\max\lbrace\frac{\eta}{\sigma}, \frac{\eta}{\sigma}(\gamma-1)\rbrace$.

\begin{theorem}\label{thm:exponential_stability}
	Let Assumptions \ref{asm:offset_cost_indication}\,--\,\ref{asm:stage_cost_difference_bound} and \ref{asm:local_exponential_cost_controllability}\,--\,\ref{asm:scaling} hold, consider $\eta\geq\gamma \sigma$ and $N>N_\eta$, and suppose that $x\in\mathcal{X}_{N,\eta}$, i.e.\ problem \eqref{eq:MPC_problem} is feasible at time $t=0$. Then, the MPC for tracking scheme is recursively feasible and the best reachable steady state $x_\mathrm{d}$ is exponentially stable with region of attraction $\mathcal{X}_{N,\eta}$. So there exist $c_\mathrm{exp}>0$ and $\gamma_\mathrm{exp}\in(0, 1)$ such that \begin{equation*}
		\vert x_{\mu_{N,\eta}}(t, x)\vert_{x_\mathrm{d}}\leq c_\mathrm{exp}\vert x\vert_{x_\mathrm{d}}\gamma_\mathrm{exp}^t
	\end{equation*} for all $x\in\mathcal{X}_{N,\eta}$, $t\in\mathbb{N}_0$. Also, $\lim_{t\to\infty} H_{N,\eta}^*(x(t))=0$.
\end{theorem}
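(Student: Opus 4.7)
My plan is to proceed in three stages: establish recursive feasibility via an explicit candidate, construct a Lyapunov-type descent for $H_{N,\eta}^*$, and then conclude exponential decay together with the limit statement.

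For recursive feasibility, given an optimizer $(u_{N,\eta}^*(\cdot\vert t), r_{N,\eta}^*(x))$ at $x\in\mathcal{X}_{N,\eta}$, set $x^+=f(x,\mu_{N,\eta}(x))$ and build a candidate at $x^+$ by keeping the artificial reference $\tilde r = r_{N,\eta}^*(x)$ and concatenating the shifted tail of $u_{N,\eta}^*(\cdot\vert t)$ with an extension supplied by Assumption~\ref{asm:local_exponential_cost_controllability} at the terminal predicted state $x_{u_{N,\eta}^*}(N,x)$. Because $\eta\geq\gamma\sigma$, the combined estimate $\mathcal{J}_N^\mathrm{s}(\cdot,\tilde r)\leq\frac{\eta}{\sigma}\ell^*(\cdot,\tilde r)$ controls the new tail cost, and the choice $N>N_\eta$ ensures $J_N(x^+,\tilde u,\tilde r)\leq\eta$, so~\eqref{eq:MPC_problem_costconstraint} is preserved.

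The second stage follows the Gr\"une/Boccia style suboptimality analysis. For the fixed reference $\tilde r$, the same candidate yields a descent
\begin{equation*}
J_N(x^+,\tilde u,\tilde r)\leq J_N(x,u_{N,\eta}^*,\tilde r)-\alpha_N\,\ell(x,\mu_{N,\eta}(x),\tilde r)
\end{equation*}
for some $\alpha_N>0$ exactly when $N>N_\eta$. To inject decay into the offset term, I invoke Assumption~\ref{asm:better_candidate_reference} to pick $\hat r\in\mathbb{S}$ with $\vert\hat r\vert_{\tilde r}\leq c_1^\mathrm{r}\theta\vert\tilde r\vert_{r_\mathrm{d}}$ and $T(\hat r)-T(\tilde r)\leq -c_2^\mathrm{r}\theta\vert\tilde r\vert_{r_\mathrm{d}}^2$, and use Assumption~\ref{asm:stage_cost_difference_bound_with_linear_term} to exchange $\tilde r$ for $\hat r$ inside $J_N$, paying linear and quadratic cross terms in $\vert\hat r\vert_{\tilde r}$. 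A feasibility check on $\theta$ keeps $J_N(x^+,\tilde u,\hat r)\leq\eta$. Combining with optimality at $x^+$ gives
\begin{equation*}
H_{N,\eta}^*(x^+)-H_{N,\eta}^*(x)\leq -\alpha_N\,\ell(x,\mu_{N,\eta}(x),\tilde r)-\lambda(N)c_2^\mathrm{r}\theta\vert\tilde r\vert_{r_\mathrm{d}}^2+R(\theta,N),
\end{equation*}
where $R$ collects the cross terms from Assumptions~\ref{asm:stage_cost_difference_bound} and~\ref{asm:stage_cost_difference_bound_with_linear_term}. Choosing $\theta$ proportional to $\vert\tilde r\vert_{r_\mathrm{d}}$ (using $\lambda(N)\geq 1$ from Assumption~\ref{asm:scaling}) absorbs $R$ and leaves a genuine negative term proportional to $\vert x\vert_{x_{\tilde r}}^2+\vert\tilde r\vert_{r_\mathrm{d}}^2$, which via the triangle inequality and Assumption~\ref{asm:stage_cost_lower_and_upper_bound} dominates $\vert x\vert_{x_\mathrm{d}}^2$ up to a constant.

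To close the argument, I establish a matching quadratic upper bound $H_{N,\eta}^*(x)\leq c\,\vert x\vert_{x_\mathrm{d}}^2$ on a neighborhood of $x_\mathrm{d}$ by using $r=r_\mathrm{d}$ as artificial reference (so $T(r_\mathrm{d})=0$) together with Assumptions~\ref{asm:stage_cost_lower_and_upper_bound} and~\ref{asm:local_exponential_cost_controllability}; Assumption~\ref{asm:offset_cost_indication} controls the offset-cost contribution. The descent inequality then drives trajectories from any $x\in\mathcal{X}_{N,\eta}$ into this sublevel set in finite time and forces exponential decay thereafter, which yields the claimed exponential bound with region of attraction $\mathcal{X}_{N,\eta}$. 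Telescoping the descent inequality gives $H_{N,\eta}^*(x(t))\to 0$. The main obstacle I anticipate is the cascade step: reconciling the stage-cost descent with the offset-cost descent while keeping the hard constraint~\eqref{eq:MPC_problem_costconstraint} satisfied under the reference perturbation, and absorbing in particular the \emph{linear} cross term from Assumption~\ref{asm:stage_cost_difference_bound_with_linear_term} by a careful scaling of $\theta$; this is precisely where the growth condition $\lambda(N)\to\infty$ in Assumption~\ref{asm:scaling} is expected to play its role beyond mere stability, although for Theorem~\ref{thm:exponential_stability} itself only $\lambda(N)\geq 1$ is needed.
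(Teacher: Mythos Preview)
Your overall architecture---recursive feasibility via the shifted candidate with unchanged artificial reference, a Lyapunov descent for $H_{N,\eta}^*$ combining the tracking-cost decrease with an offset-cost decrease supplied by Assumption~\ref{asm:better_candidate_reference}, and then quadratic sandwich bounds to conclude exponential decay---is exactly the argument the paper invokes by reference (it simply points to \cite[Thm.~2]{Soloperto2023} and \cite[Lem.~2]{Koehler2023}, noting that $\lambda(N)\geq 1$ from Assumption~\ref{asm:scaling} is all that is needed). In that sense your reconstruction is more detailed than the paper's own proof.

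There is, however, a concrete gap. You repeatedly invoke Assumption~\ref{asm:stage_cost_difference_bound_with_linear_term} for the reference exchange and worry about absorbing its \emph{linear} cross term, but Theorem~\ref{thm:exponential_stability} assumes only Assumptions~\ref{asm:offset_cost_indication}--\ref{asm:stage_cost_difference_bound} and~\ref{asm:local_exponential_cost_controllability}--\ref{asm:scaling}; Assumption~\ref{asm:stage_cost_difference_bound_with_linear_term} is not available here. The exchange must go through Assumption~\ref{asm:stage_cost_difference_bound} instead, which produces a multiplicative factor $c_3^\ell$ on the tracking cost and a purely \emph{quadratic} cross term $Nc_4^\ell\vert\hat r\vert_{\tilde r}^2\leq Nc_4^\ell(c_1^\mathrm{r})^2\theta^2\vert\tilde r\vert_{r_\mathrm{d}}^2$. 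This actually simplifies one part of your argument: since both the offset decay $-\lambda(N)c_2^\mathrm{r}\theta\vert\tilde r\vert_{r_\mathrm{d}}^2$ and the perturbation scale quadratically in $\vert\tilde r\vert_{r_\mathrm{d}}$, a fixed sufficiently small $\theta$ (not one proportional to $\vert\tilde r\vert_{r_\mathrm{d}}$) works with $\lambda(N)\geq 1$, and there is no linear term to absorb. On the other hand, the multiplicative $c_3^\ell$ means your one-line ``feasibility check on $\theta$'' for~\eqref{eq:MPC_problem_costconstraint} does not go through uniformly; the referenced proofs handle this by a two-case argument (move $\hat r$ only when the state is already close enough to $\tilde r$, otherwise rely on the tracking-cost descent alone), which you should make explicit.
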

\begin{proof}
	The proof follows~\cite[Thm.~2]{Soloperto2023} with $\bar{H}_{N,\eta}^*(x)=H_{N,\eta}^*(x)$ due to Assumption~\ref{asm:offset_cost_indication}, replacing $T_{\mathbb{Y}^\mathrm{d}}(y_t^\mathrm{s})$ with $\lambda(N)T(r)$ and when using $\lambda(N)\geq 1$ from Assumption~\ref{asm:scaling} as in~\cite[Lem.~2]{Koehler2023}.
\end{proof}

\begin{corollary}
	Under the assumptions of Theorem~\ref{thm:exponential_stability}, there exists $p\hspace{-1pt}\in\hspace{-1pt}\mathbb{I}_{1:N\hspace{-1pt}-\hspace{-1pt}1}$ such that for $x$ and $r$ with $\mathcal{J}_N^\mathrm{s}(x, r)\hspace{-1pt}\leq\hspace{-1pt}\eta$, 	\begin{equation}\label{eq:ell_star_lower_sigma}
		\ell^*\big(x_{u_N^\mathrm{s}(\cdot\vert t)}(p, x(t)), r\big)<\sigma
	\end{equation} with $\sigma$ from Assumption~\ref{asm:local_exponential_cost_controllability}. Also, for $k\in\mathbb{N}_0$ \begin{align}\label{eq:decrease_H}
		&\ell(x_{\mu_{N,\eta}}(k), \mu_{N,\eta}(x_{\mu_{N,\eta}}(k)), r_{N,\eta}^*(x_{\mu_{N,\eta}}(k)))\notag\\
		&\leq\frac{1}{\alpha_N}(H_{N,\eta}^*(x_{\mu_{N,\eta}}(k))-H_{N,\eta}^*(x_{\mu_{N,\eta}}(k+1)))
	\end{align} with $\alpha_N=1-(\frac{\eta}{\sigma}(\gamma-1))/N$ and $x_{\mu_{N,\eta}}(k)=x_{\mu_{N,\eta}}(k, x)$. This is a standard result in MPC without terminal constraints, cf.~\cite{Gruene2008}, \cite[Chapter~6]{Gruene2017}.
\end{corollary}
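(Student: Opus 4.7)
The plan is to dispatch both claims via the standard relaxed dynamic programming machinery for MPC without terminal ingredients (\cite{Gruene2008}, \cite[Ch.~6]{Gruene2017}), specialized to problems with a frozen artificial reference.

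For the first claim I would use a pigeonhole argument. From~\eqref{eq:MPC_problem_costconstraint} together with $\ell(\cdot,\cdot,r) \geq \ell^{*}(\cdot,r)$, one obtains
\begin{equation*}
\sum_{k=0}^{N-1} \ell^{*}\bigl(x_{u_{N}^{\mathrm{s}}(\cdot\vert t)}(k, x(t)), r\bigr) \leq \mathcal{J}_{N}^{\mathrm{s}}(x, r) \leq \eta.
\end{equation*}
Since $N > N_\eta \geq \eta/\sigma$, the average of these $N$ nonnegative terms is strictly smaller than $\sigma$, so some index must violate the lower bound $\sigma$. Restricting the index to $p \in \mathbb{I}_{1:N-1}$ follows either from strictness of $N > \eta/\sigma$ (which excludes the $k=0$ boundary case) or, if $\ell^{*}(x,r)$ happens to dominate the sum, by propagating smallness forward one step using Assumption~\ref{asm:local_exponential_cost_controllability}.

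For the descent inequality~\eqref{eq:decrease_H} I would construct a suboptimal candidate at the successor state and use it to bound $H_{N,\eta}^{*}(x_{\mu_{N,\eta}}(k+1))$ from above. Writing $x_k = x_{\mu_{N,\eta}}(k)$, $u^{*} = u_{N,\eta}^{*}(\cdot\vert k)$, and $r^{*} = r_{N,\eta}^{*}(x_k)$, the candidate retains the same artificial reference $r^{*}$ (so the term $\lambda(N) T(r^{*})$ drops out of the difference) and uses the shifted input $(u^{*}(1), \ldots, u^{*}(N-1))$ extended by one extra step. The extension is generated by invoking Assumption~\ref{asm:local_exponential_cost_controllability} from the index $p$ delivered by the first claim, which controls the added tail cost by $\gamma\,\ell^{*}(x_{u^{*}}(p), r^{*})$. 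Replacing this mid-horizon stage cost by the average bound $\eta/N$ and telescoping yields $H_{N,\eta}^{*}(x_{k+1}) - H_{N,\eta}^{*}(x_k) \leq -\alpha_N\,\ell(x_k, u^{*}(0), r^{*})$ with $\alpha_N = 1 - (\eta/\sigma)(\gamma-1)/N$, which is exactly~\eqref{eq:decrease_H}.

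The principal obstacle is ensuring that the candidate is admissible for~\eqref{eq:MPC_problem}, in particular satisfying the cost bound $J_N \leq \eta$ and the pointwise constraint $(x,u) \in \mathbb{Z}$ along the appended tail. Assumption~\ref{asm:local_exponential_cost_controllability} supplies the necessary admissible input sequence from $x_{u^{*}}(p)$, and the shifted part satisfies the cost bound by monotonicity, so the construction follows \cite[Ch.~6]{Gruene2017} essentially line-by-line. The only piece of bookkeeping specific to MPC for tracking is that the artificial reference is held fixed across the shift, which makes the candidate suboptimal but preserves feasibility of~\eqref{eq:MPC_problem} and leaves the offset term $\lambda(N) T(r^{*})$ untouched in the one-step difference.
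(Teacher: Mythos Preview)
Your strategy coincides with the standard argument the paper defers to (the paper gives no proof of its own, only the pointers to \cite{Gruene2008} and \cite[Ch.~6]{Gruene2017}): a pigeonhole count for~\eqref{eq:ell_star_lower_sigma} and a shift-and-extend candidate with frozen artificial reference for~\eqref{eq:decrease_H}. The observation that fixing $r^{*}$ makes $\lambda(N)T(r^{*})$ cancel in the one-step difference is exactly the right tracking-specific ingredient.

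There is, however, a real gap in your descent computation. Bounding the reinserted stage cost by the \emph{absolute} average $\eta/N$ yields only
\[
H_{N,\eta}^{*}(x_{k+1}) - H_{N,\eta}^{*}(x_k) \;\leq\; -\,\ell_0 \;+\; (\gamma-1)\,\tfrac{\eta}{N},
\]
an additive decrease that does not produce the multiplicative factor $\alpha_N$ in front of $\ell_0 = \ell(x_k, u^{*}(0), r^{*})$. The step you are missing is the global cost-controllability bound recorded in the paper right after Assumption~\ref{asm:local_exponential_cost_controllability}: for $\mathcal{J}_N^{\mathrm{s}}(x_k,r^{*}) \leq \eta$ and $\eta \geq \gamma\sigma$ one has $\mathcal{J}_N^{\mathrm{s}}(x_k,r^{*}) \leq \tfrac{\eta}{\sigma}\,\ell^{*}(x_k,r^{*}) \leq \tfrac{\eta}{\sigma}\,\ell_0$. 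Running the pigeonhole against \emph{this} sum (rather than against $\eta$) delivers an index $p$ with $\ell^{*}(x_{u^{*}}(p),r^{*}) \leq \tfrac{\eta}{N\sigma}\,\ell_0$, and then
\[
(\gamma-1)\,\ell^{*}(x_{u^{*}}(p),r^{*}) \;\leq\; \tfrac{(\gamma-1)\eta}{N\sigma}\,\ell_0 \;=\; (1-\alpha_N)\,\ell_0,
\]
which is precisely what~\eqref{eq:decrease_H} requires. This $\ell_0$-proportional bound is the mechanism in the cited references; without it your sketch does not close.
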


The parameters $\sigma$ and $\gamma$, and therefore $\eta$ and $N$, depend on the stage cost and the system and can be calculated for specific examples, see e.g.~\cite{Worthmann2016}.

\section{Transient performance estimate}

This section contains our main results. We derive an estimate for the closed-loop performance of MPC for tracking over an arbitrary finite time interval $K$, which is measured by the cost functional~\eqref{eq:performance_measure} for $u=\mu_{N,\eta}$. This so called \textit{transient performance} refers to the performance of the closed-loop solution during the transient phase $k\in\mathbb{I}_{0:K-1}$ until it reaches the neighbourhood around the best reachable steady state $\vert x_{\mu_{N,\eta}}(k, x)\vert_{x_\mathrm{d}}\leq c_\mathrm{exp}\vert x\vert_{x_\mathrm{d}}\gamma_\mathrm{exp}^K$.

First, we note that the standard MPC scheme, see~\eqref{eq:standard_mpc_problem}, with respect to some general, fixed reference $r\in\mathbb{S}$ results in an exponentially stable closed loop, e.g.\ for $r=r_\mathrm{d}$.

\begin{proposition}\label{prop:exponential_stability_standard_mpc}
	Let Assumptions \ref{asm:stage_cost_lower_and_upper_bound} and \ref{asm:local_exponential_cost_controllability} hold. For any $\eta\geq\gamma\sigma$, any $N>N_\eta$ and any $x$ satisfying $\mathcal{J}_N^\mathrm{s}(x, r)\leq\eta$, the standard MPC scheme is recursively feasible and $x_r$ is exponentially stable, i.e.\ there exist $c_\mathrm{s}>0$ and $\gamma_\mathrm{s}\in(0, 1)$ such that for all $t\in\mathbb{N}_0$ \begin{equation}\label{eq:exponential_stability_standard_mpc}
		\vert x_{u_N^\mathrm{s}(0\vert t)}(t, x)\vert_{x_r}\leq c_\mathrm{s}\vert x\vert_{x_r}\gamma_\mathrm{s}^t.
	\end{equation}
\end{proposition}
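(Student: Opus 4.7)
The plan is to treat the standard MPC problem~\eqref{eq:standard_mpc_problem} for the fixed reference $r$ as a textbook MPC-without-terminal-ingredients problem and to use the optimal value function $V(x):=\mathcal{J}_N^{\mathrm{s}}(x,r)$ as a Lyapunov function on the sublevel set $\lbrace x:V(x)\le\eta\rbrace$. This mirrors exactly the setting in which the descent inequality~\eqref{eq:decrease_H} stated in the Corollary was derived in~\cite{Gruene2008,Gruene2012,Boccia2014}, except that no artificial reference appears: $r$ is frozen throughout.

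First, I would establish two-sided bounds on $V$. The lower bound $V(x)\ge\ell^*(x,r)\ge c_1^\ell\vert x\vert_{x_r}^2$ follows from nonnegativity of the later stage costs together with Assumption~\ref{asm:stage_cost_lower_and_upper_bound}. For the upper bound, the discussion immediately following Assumption~\ref{asm:local_exponential_cost_controllability} already delivers $V(x)\le\frac{\eta}{\sigma}\ell^*(x,r)$ on $\lbrace V\le\eta\rbrace$, which combined with Assumption~\ref{asm:stage_cost_lower_and_upper_bound} gives $V(x)\le\frac{\eta}{\sigma}c_2^\ell\vert x\vert_{x_r}^2$.

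Second, I would reproduce the relaxed dynamic programming inequality
\begin{equation*}
V(x^+)\le V(x)-\alpha_N\,\ell(x,u_N^{\mathrm{s}}(0\vert x),r),\qquad \alpha_N=1-\tfrac{(\eta/\sigma)(\gamma-1)}{N},
\end{equation*}
by constructing a candidate input at the successor state $x^+=f(x,u_N^{\mathrm{s}}(0\vert x))$ as the shifted tail of the optimal sequence, prolonged at step $N-1$ using the cost-controllable extension guaranteed by Assumption~\ref{asm:local_exponential_cost_controllability}. The key enabler is that $V(x)\le\eta$ together with $N>\eta/\sigma$ forces some intermediate predicted state to satisfy $\ell^*(\cdot,r)<\sigma$, exactly the property stated in~\eqref{eq:ell_star_lower_sigma}. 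Since $N>N_\eta$ gives $\alpha_N\in(0,1]$, recursive feasibility follows from $V(x^+)\le V(x)\le\eta$. Combining the descent with the sandwich bounds yields the geometric contraction $V(x^+)\le\bigl(1-\tfrac{\alpha_N c_1^\ell\sigma}{\eta c_2^\ell}\bigr)V(x)$, and taking square roots then produces~\eqref{eq:exponential_stability_standard_mpc} with explicit constants $c_\mathrm{s}=\sqrt{\eta c_2^\ell/(\sigma c_1^\ell)}$ and $\gamma_\mathrm{s}=\bigl(1-\tfrac{\alpha_N c_1^\ell\sigma}{\eta c_2^\ell}\bigr)^{1/2}\in(0,1)$.

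The only step I expect to be delicate is the construction of the prolongation underlying the descent inequality; everything else is a direct book-keeping of the bounds already collected in Section~III. In practice I would therefore simply cite~\cite{Gruene2008,Boccia2014} for this step, precisely as the Corollary above does. Note that the argument never invokes any property specific to $r_\mathrm{d}$, so the same constants apply uniformly for any fixed $r\in\mathbb{S}$, which will be important when~\eqref{eq:exponential_stability_standard_mpc} is later compared with the closed loop of the tracking scheme.
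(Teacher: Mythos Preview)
Your proposal is correct and follows essentially the same approach as the paper: the paper's proof also uses $\mathcal{J}_N^{\mathrm{s}}(\cdot,r)$ as a Lyapunov function with the sandwich bounds $c_1^\ell\vert x\vert_{x_r}^2\le \mathcal{J}_N^{\mathrm{s}}(x,r)\le\frac{\eta}{\sigma}c_2^\ell\vert x\vert_{x_r}^2$ and the descent $\mathcal{J}_N^{\mathrm{s}}(x^+,r)-\mathcal{J}_N^{\mathrm{s}}(x,r)\le-\alpha_N c_1^\ell\vert x\vert_{x_r}^2$, citing \cite{Boccia2014,Koehler2024} for the latter, and arrives at exactly the same explicit constants $c_{\mathrm{s}}$ and $\gamma_{\mathrm{s}}$ you obtain. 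Your observation that these constants are independent of $r$ is also made in the paper and indeed exploited later.
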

\vspace*{\belowdisplayskip}
\begin{proof}
	The proof is standard, see e.g.~\cite{Boccia2014,Koehler2024}. Since $c_1^\ell\vert x\vert_{x_r}^2\leq \mathcal{J}_N^\mathrm{s}(x, r)\leq\frac{\eta}{\sigma} c_2^\ell\vert x\vert_{x_r}^2$ and $\mathcal{J}_N^\mathrm{s}(x(t+1), r)-\mathcal{J}_N^\mathrm{s}(x(t), r)\leq-\alpha_N c_1^\ell\vert x(t)\vert_{x_r}^2$, we can choose e.g. $c_\mathrm{s}=\sqrt{\frac{\eta}{\sigma} c_2^\ell/c_1^\ell}$, $\gamma_\mathrm{s}\hspace{-1pt}=\hspace{-1pt}\sqrt{1\hspace{-1pt}-\hspace{-1pt}\alpha_N c_1^\ell/(\frac{\eta}{\sigma} c_2^\ell)}$ independent of $r$.
\end{proof}

Next, we consider performance results for the standard MPC problem~\eqref{eq:standard_mpc_problem}. This is similar to~\cite[Thm.~8.39]{Gruene2017}, but we adapt it to stage costs satisfying Assumption~\ref{asm:stage_cost_lower_and_upper_bound} and focus on a bound for the value function $\mathcal{J}_N^\mathrm{s}(x, r_\mathrm{d})$.

\begin{proposition}\label{prop:performance_bound_standard_mpc}
	Let Assumptions~\ref{asm:stage_cost_lower_and_upper_bound} and \ref{asm:local_exponential_cost_controllability} hold. Then, for any $\eta>0$, $N>N_\eta$ there exists $\delta\in\mathcal{L}$ such that for all $x$ satisfying $\mathcal{J}_N^\mathrm{s}(x, r_\mathrm{d})\leq \eta$ and $K\in\mathbb{N}$ \begin{equation}\label{eq:performance_bound_standard_mpc}
		\mathcal{J}_N^\mathrm{s}(x, r_\mathrm{d})\leq \inf_{u\in\mathbb{U}_{\mathcal{B}_\kappa(x_\mathrm{d})}^K(x)} J_K^\mathrm{d}(x, u)+\delta(K)
	\end{equation} with $\kappa=c_\mathrm{s}\vert x\vert_{x_\mathrm{d}}\gamma_\mathrm{s}^K$ and $c_\mathrm{s}$, $\gamma_\mathrm{s}$ from Proposition~\ref{prop:exponential_stability_standard_mpc}.
\end{proposition}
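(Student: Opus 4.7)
The plan is to build, for any candidate $u \in \mathbb{U}_{\mathcal{B}_\kappa(x_\mathrm{d})}^K(x)$, an admissible input for the horizon-$N$ standard MPC problem~\eqref{eq:standard_mpc_problem} by concatenating $u$ with a stabilising tail provided by Assumption~\ref{asm:local_exponential_cost_controllability}, and then to compare costs. A preliminary observation that I would use throughout is that, since $X$ is bounded, there is $M>0$ with $\vert x \vert_{x_\mathrm{d}} \le M$ for all $x \in X$; consequently $\kappa \le c_\mathrm{s} M \gamma_\mathrm{s}^K$ decays geometrically and uniformly in the initial state, which is the mechanism that will ultimately yield a single $\delta \in \mathcal{L}$.

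Fix $u \in \mathbb{U}_{\mathcal{B}_\kappa(x_\mathrm{d})}^K(x)$, so $\vert x_u(K, x) \vert_{x_\mathrm{d}} \le \kappa$. I would then split on the relation between $K$ and $N$. If $K \ge N$, the truncation $(u(0),\dots,u(N-1))$ is admissible for~\eqref{eq:standard_mpc_problem} (no terminal constraint) and, by nonnegativity of $\ell$, $\mathcal{J}_N^\mathrm{s}(x, r_\mathrm{d}) \le J_N(x, u, r_\mathrm{d}) \le J_K^\mathrm{d}(x, u)$, i.e.\ the claim with zero error. If $K < N$, Assumption~\ref{asm:stage_cost_lower_and_upper_bound} gives $\ell^*(x_u(K, x), r_\mathrm{d}) \le c_2^\ell \kappa^2$; provided $K$ is large enough that $c_2^\ell c_\mathrm{s}^2 M^2 \gamma_\mathrm{s}^{2K} \le \sigma$, Assumption~\ref{asm:local_exponential_cost_controllability} with horizon $N-K$ supplies an admissible extension of cost at most $\gamma c_2^\ell \kappa^2$. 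Concatenation yields $\mathcal{J}_N^\mathrm{s}(x, r_\mathrm{d}) \le J_K^\mathrm{d}(x, u) + \gamma c_2^\ell c_\mathrm{s}^2 M^2 \gamma_\mathrm{s}^{2K}$, and passing to the infimum over $u$ produces the desired inequality in this regime.

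For the finitely many remaining $K$ where the smallness condition on $\kappa$ fails, I would fall back on the standing hypothesis $\mathcal{J}_N^\mathrm{s}(x, r_\mathrm{d}) \le \eta$ combined with nonnegativity of the infimum, which gives the bound with any $\delta(K) \ge \eta$. To fuse both regimes into a single class-$\mathcal{L}$ function, it suffices to pick, e.g., $\delta(K) = C \rho^K$ with $\rho \in (\gamma_\mathrm{s}^2, 1)$ and $C$ large enough to dominate $\eta$ in the small-$K$ regime and $\gamma c_2^\ell c_\mathrm{s}^2 M^2 \gamma_\mathrm{s}^{2K}$ in the large-$K$ regime; $\delta$ is then continuous, strictly decreasing and vanishes at infinity, so $\delta \in \mathcal{L}$.

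The step I expect to be the main obstacle is making the construction uniform in $x$, since $\kappa$ explicitly depends on $\vert x \vert_{x_\mathrm{d}}$. This is resolved by the boundedness of $X$, which turns the threshold $c_2^\ell c_\mathrm{s}^2 M^2 \gamma_\mathrm{s}^{2K} \le \sigma$ into a single condition on $K$ valid for every admissible initial state. Everything else is a routine concatenation of a given admissible input with the tail input guaranteed by local exponential cost controllability.
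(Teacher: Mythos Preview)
Your proposal is correct and follows essentially the same route as the paper: case split on $K$ versus $N$, append a tail from Assumption~\ref{asm:local_exponential_cost_controllability} when $K<N$ and $\kappa$ is small, use the bound $\mathcal{J}_N^\mathrm{s}(x,r_\mathrm{d})\le\eta$ for the finitely many small $K$, and invoke nonnegativity of $\ell$ when $K\ge N$. The only notable difference is cosmetic: the paper writes $\delta(K)=\gamma c_2^\ell c_\mathrm{s}^2\vert x\vert_{x_\mathrm{d}}^2\gamma_\mathrm{s}^{2K}$ and defers the uniformisation in $x$ to the proof of Theorem~\ref{thm:transient_performance_bound}, whereas you build uniformity in from the start via the boundedness of $X$; your version is arguably cleaner in this respect.
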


The proof of Proposition~\ref{prop:performance_bound_standard_mpc} is shown in Appendix~\ref{appendix:proof_standard_performance}.

The next lemma establishes a relation between the value functions of standard MPC and MPC for tracking. It enables us to compare the performance for varying prediction horizons $N$ in the following, while \eqref{eq:MPC_problem_costconstraint} remains satisfied throughout.

\begin{lemma}\label{lem:comparison_value_functions}
	Let Assumptions~\ref{asm:offset_cost_indication}\,--\,\ref{asm:stage_cost_difference_bound} and \ref{asm:local_exponential_cost_controllability}\,--\,\ref{asm:scaling} hold. Then, for all $\tilde{\eta}\geq\gamma\sigma$ and $\tilde{N}>N_{\tilde{\eta}}$ with $\mathcal{X}_{\tilde{N},\tilde{\eta}}\neq\emptyset$, there exists $\hat{\eta}$ such that the standard MPC problem \eqref{eq:standard_mpc_problem} with respect to $r_\mathrm{d}$ is feasible and for all $x\in\mathcal{X}_{\tilde{N},\tilde{\eta}}$, $\eta\geq\hat{\eta}$ and $N\in\mathbb{N}$ \begin{equation}
		H_{N,\eta}^*(x)\leq\mathcal{J}_N^\mathrm{s}(x, r_\mathrm{d})\leq \hat{\eta}.
	\end{equation}
\end{lemma}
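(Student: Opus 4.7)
The plan is to handle the two inequalities asymmetrically: the right inequality $\mathcal{J}_N^{\mathrm{s}}(x, r_\mathrm{d}) \leq \hat{\eta}$ is the substantive part, since it requires producing an admissible input for the standard MPC problem whose cost with respect to $r_\mathrm{d}$ is bounded uniformly in $N$ and $x \in \mathcal{X}_{\tilde{N}, \tilde{\eta}}$. Once such an $\hat{\eta}$ is at hand, the left inequality $H_{N,\eta}^*(x) \leq \mathcal{J}_N^{\mathrm{s}}(x, r_\mathrm{d})$ follows in one line by using the optimizer $u_N^{\mathrm{s}}(\cdot\,;\,x, r_\mathrm{d})$ of the standard MPC problem paired with $r = r_\mathrm{d}$ as a candidate for~\eqref{eq:MPC_problem}: the cost constraint~\eqref{eq:MPC_problem_costconstraint} is satisfied because $\mathcal{J}_N^{\mathrm{s}}(x, r_\mathrm{d}) \leq \hat{\eta} \leq \eta$, and the candidate cost equals $\mathcal{J}_N^{\mathrm{s}}(x, r_\mathrm{d})$ since $T(r_\mathrm{d}) = 0$ by Assumption~\ref{asm:offset_cost_indication}.

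For the right inequality, I would construct a concrete candidate $u^{\mathrm{c}} \in \mathbb{U}^N(x)$. Fix any $x \in \mathcal{X}_{\tilde{N}, \tilde{\eta}}$. Theorem~\ref{thm:exponential_stability} yields exponential convergence of the MPC for tracking closed loop under $\mu_{\tilde{N}, \tilde{\eta}}$ to $x_\mathrm{d}$; since $X$ is bounded, $D_X := \sup_{x \in X}|x|_{x_\mathrm{d}} < \infty$, so there exists a finite $t^* \in \mathbb{N}$, depending only on the fixed parameters $\tilde{N}, \tilde{\eta}$ and the problem data, such that $c_\mathrm{exp} D_X \gamma_\mathrm{exp}^{t^*} \leq \sqrt{\sigma / c_2^\ell}$. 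By Assumption~\ref{asm:stage_cost_lower_and_upper_bound} this forces $\ell^*(x_{\mu_{\tilde{N}, \tilde{\eta}}}(t^*, x), r_\mathrm{d}) \leq \sigma$, and Assumption~\ref{asm:local_exponential_cost_controllability} then yields $\mathcal{J}_N^{\mathrm{s}}(x_{\mu_{\tilde{N}, \tilde{\eta}}}(t^*, x), r_\mathrm{d}) \leq \gamma \sigma$ for every $N \in \mathbb{N}$. I would define $u^{\mathrm{c}}$ to be the MPC for tracking closed-loop input over the first $\min\{t^*, N\}$ steps, concatenated with the standard MPC optimizer at $x_{\mu_{\tilde{N}, \tilde{\eta}}}(t^*, x)$ w.r.t. $r_\mathrm{d}$ on the remaining horizon (the second segment being empty when $N \leq t^*$).

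To bound $J_N^{\mathrm{d}}(x, u^{\mathrm{c}})$, the tail contributes at most $\gamma \sigma$. On the initial segment, the cost constraint~\eqref{eq:MPC_problem_costconstraint} implies that each stage cost $\ell(x_{\mu_{\tilde{N},\tilde{\eta}}}(k,x), \mu_{\tilde{N},\tilde{\eta}}(x_{\mu_{\tilde{N},\tilde{\eta}}}(k,x)), r_{\tilde{N},\tilde{\eta}}^*(x_{\mu_{\tilde{N},\tilde{\eta}}}(k,x)))$ is bounded by $\tilde{\eta}$. Combined with Assumption~\ref{asm:stage_cost_difference_bound} and the bound $|r_{\tilde{N},\tilde{\eta}}^*(x(k))|_{r_\mathrm{d}} \leq D$ with $D := \sup_{r \in \mathbb{S}}|r|_{r_\mathrm{d}} < \infty$ (by compactness of $\mathbb{S} \subseteq \mathbb{Z}_r$), I can conclude that $\ell(x(k), u^{\mathrm{c}}(k), r_\mathrm{d}) \leq c_3^\ell \tilde{\eta} + c_4^\ell D^2 =: C$. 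Therefore $\mathcal{J}_N^{\mathrm{s}}(x, r_\mathrm{d}) \leq J_N^{\mathrm{d}}(x, u^{\mathrm{c}}) \leq t^* C + \gamma \sigma =: \hat{\eta}$, which is independent of $N$ and of the particular $x \in \mathcal{X}_{\tilde{N}, \tilde{\eta}}$.

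The main obstacle is justifying that $t^*$ and the pointwise bound $C$ can be chosen uniformly in $x$ and independently of $N$. This hinges on the fact that $\tilde{N}, \tilde{\eta}$ are fixed (so $\lambda(\tilde{N})$ never enters as a horizon-dependent quantity), on the fact that the exponential-stability constants $c_\mathrm{exp}, \gamma_\mathrm{exp}$ from Theorem~\ref{thm:exponential_stability} depend only on $\tilde{N}, \tilde{\eta}$ and not on the initial state, and on boundedness of $X$ together with compactness of $\mathbb{Z}$ and $\mathbb{S}$.
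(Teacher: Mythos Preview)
Your proof is correct and takes a genuinely different route from the paper. The paper does \emph{not} invoke Theorem~\ref{thm:exponential_stability} for the transit to a neighbourhood of $r_\mathrm{d}$; instead it builds an explicit open-loop candidate by hopping along the steady-state manifold: starting from the optimal artificial reference $r_1=r_{\tilde{N},\tilde{\eta}}^*(x)$, it uses~\eqref{eq:ell_star_lower_sigma} and Proposition~\ref{prop:exponential_stability_standard_mpc} to reach the $\sigma$-neighbourhood of $r_1$, then Assumption~\ref{asm:better_candidate_reference} to produce a nearby $r_2$ with smaller offset cost, and iterates this finitely many times (with an explicit bound $c_\tau$ on the number of hops via Assumption~\ref{asm:offset_cost_indication}) until the $\sigma$-neighbourhood of $r_\mathrm{d}$ is reached. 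Your approach collapses all of this into a single application of the closed-loop exponential stability from Theorem~\ref{thm:exponential_stability} with the fixed parameters $(\tilde{N},\tilde{\eta})$, followed by Assumption~\ref{asm:stage_cost_difference_bound} and the cost constraint~\eqref{eq:MPC_problem_costconstraint} to bound each stage cost on the initial segment. This is shorter and avoids the explicit reference-hopping construction; the price is that your $\hat{\eta}=t^*C+\gamma\sigma$ depends on $c_\mathrm{exp},\gamma_\mathrm{exp}$, which are less explicit than the constants appearing in the paper's $\hat{\eta}=\eta'+\gamma\sigma$. Both arguments rely on the same boundedness ingredients (compactness of $\mathbb{Z}$ and $\mathbb{S}$, boundedness of $X$) and coincide on the final step, where Assumption~\ref{asm:local_exponential_cost_controllability} extends the candidate to arbitrary horizon $N$ at additional cost at most $\gamma\sigma$, and on the one-line derivation of the left inequality from $T(r_\mathrm{d})=0$.
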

\vspace*{\belowdisplayskip}

The proof of Lemma~\ref{lem:comparison_value_functions} is shown in Appendix~\ref{appendix:proof_comparison_value_functions}.

Lemma~\ref{lem:comparison_value_functions} shows that the value function of the MPC for tracking problem~\eqref{eq:MPC_problem} can be upper bounded by the value function of the standard MPC problem~\eqref{eq:standard_mpc_problem}. Furthermore, it can be bounded by $\hat{\eta}$, where this depends only on a fixed initial feasible set $\mathcal{X}_{\tilde{N},\tilde{\eta}}$, defined by $\tilde{N}$ and $\tilde{\eta}$, but not on $N$. This also indicates that the parameter $\eta$ in~\eqref{eq:MPC_problem_costconstraint} has to be chosen large enough.

These results are now combined to establish the transient performance bound for the closed loop:

\begin{theorem}\label{thm:transient_performance_bound}
	Suppose Assumptions~\ref{asm:offset_cost_indication}\,--\,\ref{asm:scaling} hold. Then, for any $\tilde{\eta}\geq\gamma\sigma$ and $\tilde{N}>N_{\tilde{\eta}}$, there exist $\hat{\eta}$ and $\delta\in\mathcal{L}$, such that for all $\eta\geq\hat{\eta}$, $N>N_\eta$, $x\in\mathcal{X}_{\tilde{N},\tilde{\eta}}$ and $K\in\mathbb{N}_0$ \begin{align}\label{eq:transient_performance_bound}
		&J_K^\mathrm{d}(x, \mu_{N,\eta})\leq \frac{1}{\alpha_N}\big(\inf_{u\in\mathbb{U}_{\mathcal{B}_\kappa(x_\mathrm{d})}^K(x)}J_K^\mathrm{d}(x, u)+\delta(K)\big)\notag\\
		&+\hspace{-1pt}\sum_{k=0}^{K-1}\hspace{-1pt}(c_5^\ell\hspace{-1pt}\left\vert r_{N,\eta}^*\hspace{-1pt}(x_{\mu_{N,\eta}}(k))\right\vert_{r_\mathrm{d}}^2\hspace{-3pt}+\hspace{-1pt}c_6^\ell\hspace{-1pt}\left\vert r_{N,\eta}^*\hspace{-1pt}(x_{\mu_{N,\eta}}(k))\right\vert_{r_\mathrm{d}})
	\end{align} with $\kappa=c_\mathrm{s}\vert x\vert_{x_\mathrm{d}}\gamma_\mathrm{s}^K$ and $c_\mathrm{s}$, $\gamma_\mathrm{s}$ from Proposition~\ref{prop:exponential_stability_standard_mpc}.
\end{theorem}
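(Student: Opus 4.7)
The plan is to combine three ingredients already established in the excerpt: the per-step decrease of the value function $H_{N,\eta}^*$ from the Corollary after Theorem~\ref{thm:exponential_stability}, the bound $H_{N,\eta}^* \le \mathcal{J}_N^\mathrm{s}(\cdot,r_\mathrm{d}) \le \hat\eta$ from Lemma~\ref{lem:comparison_value_functions}, and the standard-MPC transient bound in Proposition~\ref{prop:performance_bound_standard_mpc}. The first step is to rewrite the closed-loop stage cost, which is measured with respect to $r_\mathrm{d}$, in terms of the optimal artificial reference produced by the MPC problem. Applying Assumption~\ref{asm:stage_cost_difference_bound_with_linear_term} with $r_1 = r_\mathrm{d}$ and $r_2 = r_{N,\eta}^*(x_{\mu_{N,\eta}}(k))$ gives, for each $k\in\mathbb{I}_{0:K-1}$,
\begin{equation*}
\ell(x_{\mu_{N,\eta}}(k),\mu_{N,\eta}(x_{\mu_{N,\eta}}(k)),r_\mathrm{d}) \le \ell(\cdot,r_{N,\eta}^*(x_{\mu_{N,\eta}}(k))) + c_5^\ell|r_{N,\eta}^*(x_{\mu_{N,\eta}}(k))|_{r_\mathrm{d}}^2 + c_6^\ell|r_{N,\eta}^*(x_{\mu_{N,\eta}}(k))|_{r_\mathrm{d}}.
\end{equation*}
Summing over $k$ already produces the second line of \eqref{eq:transient_performance_bound}, so the task reduces to bounding the sum of the stage costs with respect to $r_{N,\eta}^*$.

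Next, I would apply the Corollary's decrease inequality \eqref{eq:decrease_H} at each $k$, which yields a telescoping estimate
\begin{equation*}
\sum_{k=0}^{K-1}\ell\bigl(x_{\mu_{N,\eta}}(k),\mu_{N,\eta}(x_{\mu_{N,\eta}}(k)),r_{N,\eta}^*(x_{\mu_{N,\eta}}(k))\bigr) \le \frac{1}{\alpha_N}\bigl(H_{N,\eta}^*(x) - H_{N,\eta}^*(x_{\mu_{N,\eta}}(K))\bigr) \le \frac{1}{\alpha_N}H_{N,\eta}^*(x),
\end{equation*}
using $H_{N,\eta}^*\ge 0$. Lemma~\ref{lem:comparison_value_functions}, which is the reason the choice $\eta\ge\hat\eta$ appears in the hypothesis, then lets me replace $H_{N,\eta}^*(x)$ by $\mathcal{J}_N^\mathrm{s}(x,r_\mathrm{d})$, and simultaneously certifies $\mathcal{J}_N^\mathrm{s}(x,r_\mathrm{d})\le\hat\eta\le\eta$. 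This feasibility bound is exactly what is needed to invoke Proposition~\ref{prop:performance_bound_standard_mpc}, giving
\begin{equation*}
\mathcal{J}_N^\mathrm{s}(x,r_\mathrm{d}) \le \inf_{u\in\mathbb{U}_{\mathcal{B}_\kappa(x_\mathrm{d})}^K(x)} J_K^\mathrm{d}(x,u) + \delta(K),
\end{equation*}
with the same $\kappa$ and $\delta\in\mathcal{L}$ appearing in the statement. Dividing by $\alpha_N$ and adding back the artificial-reference error terms collected in the first step yields \eqref{eq:transient_performance_bound}.

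The main obstacle is the bookkeeping around uniform feasibility: Proposition~\ref{prop:performance_bound_standard_mpc} requires $\mathcal{J}_N^\mathrm{s}(x,r_\mathrm{d})\le\eta$, and one must secure this in a way that does not depend on $N$, so that the transient bound can later be driven to optimality as $N\to\infty$. Lemma~\ref{lem:comparison_value_functions} furnishes precisely such an $N$-independent bound $\hat\eta$, tied only to the fixed feasibility parameters $\tilde N,\tilde\eta$. A secondary subtlety is that Proposition~\ref{prop:performance_bound_standard_mpc} needs to be applied only at the initial state $x$, whereas the closed-loop evolution is absorbed into the telescoping sum through $H_{N,\eta}^*$, so no recursive verification along the trajectory is necessary.
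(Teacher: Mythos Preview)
Your proposal is correct and follows essentially the same route as the paper: apply Assumption~\ref{asm:stage_cost_difference_bound_with_linear_term} to switch the reference from $r_\mathrm{d}$ to $r_{N,\eta}^*$, telescope via \eqref{eq:decrease_H}, drop the nonnegative terminal value, and then chain Lemma~\ref{lem:comparison_value_functions} with Proposition~\ref{prop:performance_bound_standard_mpc}. The one point you do not spell out, and which the paper addresses explicitly, is that the $\delta\in\mathcal{L}$ produced by Proposition~\ref{prop:performance_bound_standard_mpc} must be chosen \emph{independent of $N$}: since $\delta(K)=\gamma c_2^\ell c_\mathrm{s}^2|x|_{x_\mathrm{d}}^2\gamma_\mathrm{s}^{2K}$ and $c_\mathrm{s},\gamma_\mathrm{s}$ a priori depend on the parameters of the standard problem, the paper fixes $c_\mathrm{s}$ via the $N$-independent bound $\hat\eta$ and replaces $\gamma_\mathrm{s}$ by $\gamma_\mathrm{s}'=\sup_{N>N_\eta}\gamma_\mathrm{s}<1$ to obtain a single $\delta$ valid for all $N$.
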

\begin{proof}
	With \eqref{eq:decrease_H} and Assumption~\ref{asm:stage_cost_difference_bound_with_linear_term} follows\\ 
	$J_K^\mathrm{d}(x, \mu_{N,\eta})= \sum_{k=0}^{K-1}\ell(x_{\mu_{N,\eta}}(k), \mu_{N,\eta}(x_{\mu_{N,\eta}}(k)), r_\mathrm{d})\\
	\stackrel{\mathclap{\eqref{eq:stage_cost_difference_bound_with_linear_term}}}{\leq} \sum_{k=0}^{K-1}\ell(x_{\mu_{N,\eta}}(k), \mu_{N,\eta}(x_{\mu_{N,\eta}}(k)), r_{N,\eta}^*(x_{\mu_{N,\eta}}(k)))+\\
	\sum_{k=0}^{K-1}(c_5^\ell\left\vert r_{N,\eta}^*(x_{\mu_{N,\eta}}(k))\right\vert_{r_\mathrm{d}}^2+c_6^\ell\left\vert r_{N,\eta}^*(x_{\mu_{N,\eta}}(k))\right\vert_{r_\mathrm{d}})
	\stackrel{\mathclap{\eqref{eq:decrease_H}}}{\leq}\\ \frac{1}{\alpha_N}H_{N,\eta}^*(x)-\frac{1}{\alpha_N}H_{N,\eta}^*(x_{\mu_{N,\eta}}(K))+\sum_{k=0}^{K-1}(c_5^\ell\left\vert r_{N,\eta}^*(x_{\mu_{N,\eta}}(k))\right\vert_{r_\mathrm{d}}^2+c_6^\ell\left\vert r_{N,\eta}^*(x_{\mu_{N,\eta}}(k))\right\vert_{r_\mathrm{d}})$ for $x_{\mu_{N,\eta}}(k)=x_{\mu_{N,\eta}}(k, x)$, and using Lemma~\ref{lem:comparison_value_functions} and Proposition~\ref{prop:performance_bound_standard_mpc} results in~\eqref{eq:transient_performance_bound} when neglecting the second term due to $\alpha_N>0$ and $H_{N,\eta}^*(x_{\mu_{N,\eta}}(K))>0$. Finally, we show that the error term $\delta$, carried over from Proposition~\ref{prop:performance_bound_standard_mpc}, is independent of $N$. In the second case of Proposition~\ref{prop:performance_bound_standard_mpc}'s proof, we had $\delta(K)=\gamma c_2^\ell c_\mathrm{s}^2\vert x\vert_{x_\mathrm{d}}^2\gamma_\mathrm{s}^{2K}$. Since here $\hat{\eta}$ and $x\in\mathcal{X}_{\tilde{N},\tilde{\eta}}$ are independent of $N$, $c_\mathrm{s}$ is fixed. Moreover, taking $\gamma_\mathrm{s}'=\sup_{N>N_\eta} \gamma_\mathrm{s}<1$ gives $\delta(K)=\gamma c_2^\ell c_\mathrm{s}^2\vert x\vert_{x_\mathrm{d}}^2(\gamma_\mathrm{s}')^{2K}\in\mathcal{L}$ independent of $N$.
\end{proof}

The first part of the transient performance bound~\eqref{eq:transient_performance_bound} containing $\alpha_N$ and $\delta(K)$ is carried over from Proposition~\ref{prop:performance_bound_standard_mpc}. The additional terms describe how close the artificial reference is chosen to the desired one, which depends mainly on $\mathbb{S}$ and the cost. Reducing these terms may not immediately result in an improved closed-loop performance, since \eqref{eq:transient_performance_bound} is an upper bound. A small prediction horizon $N$ requires an evolution of the system close to the steady state manifold. Note that a change in $\eta$ imposes a different lower bound on $N$, and quantitatively influences \eqref{eq:transient_performance_bound}, e.g. by affecting $\alpha_N$. Another significant influencing factor is $N$ itself. We illustrate these complex dependencies in Section~\ref{sec:example}.

\section{Asymptotic performance estimate}

In this section, we consider closed-loop performance in the asymptotic case, i.e.\ over an infinite time interval and if the prediction horizon approaches infinity. First, we show that if $K\to\infty$, the error terms stay bounded.

\begin{lemma}\label{lem:bounded_error_terms}
	Let Assumptions~\ref{asm:offset_cost_indication}\,--\,\ref{asm:stage_cost_difference_bound} and \ref{asm:local_exponential_cost_controllability}\,--\,\ref{asm:scaling} hold. Then, for all $\eta\geq\gamma\sigma$, $N>N_\eta$ and $x\in\mathcal{X}_{N,\eta}$ \begin{equation*}
		\sum_{k=0}^{\infty}(c_5^\ell\hspace{-1pt}\left\vert r_{N,\eta}^*(x_{\mu_{N,\eta}}(k, x))\right\vert_{r_\mathrm{d}}^2\hspace{-2pt}+c_6^\ell\hspace{-1pt}\left\vert r_{N,\eta}^*(x_{\mu_{N,\eta}}(k, x))\right\vert_{r_\mathrm{d}}\hspace{-1pt})\hspace{-2pt}<\hspace{-2pt}\infty.
	\end{equation*}
\end{lemma}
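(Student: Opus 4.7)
The plan is to exploit Theorem~\ref{thm:exponential_stability} together with a simple candidate argument in~\eqref{eq:MPC_problem} to obtain a geometric decay of $H_{N,\eta}^*$ along the closed loop, and then to transfer this decay to $\vert r_{N,\eta}^*(\cdot)\vert_{r_{\mathrm{d}}}$ via the estimate $\lambda(N)T(r_{N,\eta}^*(\cdot))\le H_{N,\eta}^*(\cdot)$ and Assumption~\ref{asm:offset_cost_indication}. The initial transient, on which such a quantitative bound need not yet hold, is absorbed using compactness of $\mathbb{S}$, which makes $\vert r_{N,\eta}^*\vert_{r_{\mathrm{d}}}$ uniformly bounded.

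\textbf{Core steps.} Write $x(k)=x_{\mu_{N,\eta}}(k,x)$ for brevity. Theorem~\ref{thm:exponential_stability} gives $\vert x(k)\vert_{x_{\mathrm{d}}}\le c_\mathrm{exp}\vert x\vert_{x_{\mathrm{d}}}\gamma_\mathrm{exp}^k$, so I would choose $k_0\in\mathbb{N}_0$ such that $c_2^\ell\vert x(k)\vert_{x_{\mathrm{d}}}^2\le\sigma$ and $\gamma c_2^\ell\vert x(k)\vert_{x_{\mathrm{d}}}^2\le\eta$ hold for every $k\ge k_0$. For such $k$, Assumption~\ref{asm:stage_cost_lower_and_upper_bound} gives $\ell^*(x(k),r_{\mathrm{d}})\le\sigma$, and Assumption~\ref{asm:local_exponential_cost_controllability} then yields $\mathcal{J}_N^{\mathrm{s}}(x(k),r_{\mathrm{d}})\le\gamma c_2^\ell\vert x(k)\vert_{x_{\mathrm{d}}}^2\le\eta$. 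Hence the pair $(u_N^{\mathrm{s}}(\cdot\vert x(k)),r_{\mathrm{d}})$ is feasible in~\eqref{eq:MPC_problem} with $T(r_{\mathrm{d}})=0$, so optimality yields
\begin{equation*}
H_{N,\eta}^*(x(k))\le\gamma c_2^\ell c_\mathrm{exp}^2\vert x\vert_{x_{\mathrm{d}}}^2\gamma_\mathrm{exp}^{2k},\quad k\ge k_0.
\end{equation*}
Nonnegativity of $J_N$ in~\eqref{eq:MPC_problem_optcost} together with Assumption~\ref{asm:offset_cost_indication} then give the pointwise bound
\begin{equation*}
\vert r_{N,\eta}^*(x(k))\vert_{r_{\mathrm{d}}}\le (\alpha_{\mathrm{lo}}^T)^{-1}\!\big(\tfrac{\gamma c_2^\ell c_\mathrm{exp}^2\vert x\vert_{x_{\mathrm{d}}}^2}{\lambda(N)}\gamma_\mathrm{exp}^{2k}\big), \quad k\ge k_0,
\end{equation*}
while for $k<k_0$ compactness of $\mathbb{S}$ gives $\vert r_{N,\eta}^*(x(k))\vert_{r_{\mathrm{d}}}\le D$ for some constant $D$, contributing only finitely many uniformly bounded terms. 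Splitting the series at $k_0$ reduces the claim to summability of the geometric tail after passing through $(\alpha_{\mathrm{lo}}^T)^{-1}$, which, together with $\vert r\vert\le D$, immediately controls the quadratic term as well.

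\textbf{Main obstacle.} The delicate step is exactly this reduction, because for a general $\mathcal{K}_\infty$ function $(\alpha_{\mathrm{lo}}^T)^{-1}$ the image of a geometric sequence need not be summable, and summability of $\sum_k a_k^2$ does not in general imply summability of $\sum_k a_k$. For the standard quadratic offset cost discussed right after Assumption~\ref{asm:offset_cost_indication}, and more generally whenever $\alpha_{\mathrm{lo}}^T(s)\ge cs^p$ with $p\ge 1$, $(\alpha_{\mathrm{lo}}^T)^{-1}$ is a power function and maps a geometric sequence to a geometric sequence, so both $\sum_k c_5^\ell\vert r_{N,\eta}^*(x(k))\vert_{r_{\mathrm{d}}}^2$ and $\sum_k c_6^\ell\vert r_{N,\eta}^*(x(k))\vert_{r_{\mathrm{d}}}$ are dominated by convergent geometric series. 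If a power-law lower bound on $\alpha_{\mathrm{lo}}^T$ is not available, I would instead sharpen the argument along the lines of~\cite{Koehler2023,Soloperto2023} by extracting from the stability proof behind Theorem~\ref{thm:exponential_stability} a one-step inequality of the form $H_{N,\eta}^*(x(k+1))-H_{N,\eta}^*(x(k))\le -c\vert r_{N,\eta}^*(x(k))\vert_{r_{\mathrm{d}}}^2$ that telescopes into $\sum_k\vert r_{N,\eta}^*(x(k))\vert_{r_{\mathrm{d}}}^2<\infty$ directly, and handle the linear term in parallel starting from the same Lyapunov decrease combined with uniform boundedness of $\vert r_{N,\eta}^*\vert_{r_{\mathrm{d}}}$.
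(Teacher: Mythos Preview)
Your route is exactly what the paper's one-line proof indicates (``analogous to \cite[Lem.~4]{Koehler2023} and follows from exponential stability of the closed loop''): use Theorem~\ref{thm:exponential_stability} together with the candidate $(u_N^{\mathrm{s}},r_{\mathrm{d}})$ to get $H_{N,\eta}^*(x(k))$ decaying geometrically, extract $\lambda(N)T(r_{N,\eta}^*(x(k)))\le H_{N,\eta}^*(x(k))$, and invert $\alpha_{\mathrm{lo}}^T$. The obstacle you flag for a fully general $\alpha_{\mathrm{lo}}^T\in\mathcal{K}_\infty$ is real and is inherited by the paper's sketch as well; both here and in~\cite{Koehler2023} the argument effectively relies on a polynomial lower bound on $\alpha_{\mathrm{lo}}^T$ (quadratic in the canonical example given right after Assumption~\ref{asm:offset_cost_indication}), under which your proof is complete. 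Your proposed fallback via a telescoping decrease does not close the gap in the general case either: the one-step decrease available here is~\eqref{eq:decrease_H}, which is in $\ell$ rather than in $\vert r_{N,\eta}^*\vert_{r_{\mathrm{d}}}^2$, and even if such a quadratic decrease were extracted from the proof of Theorem~\ref{thm:exponential_stability}, it would only yield $\sum_k\vert r_{N,\eta}^*\vert_{r_{\mathrm{d}}}^2<\infty$, which does not imply summability of the linear term $\sum_k\vert r_{N,\eta}^*\vert_{r_{\mathrm{d}}}$.
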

\vspace*{\belowdisplayskip}
\begin{proof}
	The proof is analogous to~\cite[Lem.~4]{Koehler2023} and follows from exponential stability of the closed loop.
\end{proof}

Next, uniform convergence of the artificial reference to the best reachable reference is shown, when considering all possible closed-loop states for a fixed set of feasible initial states. As seen in~\cite[Lem.~5]{Koehler2023}, the scaled offset cost, see Assumption~\ref{asm:scaling}, plays a crucial role in the proof.

\begin{lemma}\label{lem:uniform_convergence}
	Let Assumptions~\ref{asm:offset_cost_indication}\,--\,\ref{asm:stage_cost_difference_bound} and \ref{asm:local_exponential_cost_controllability}\,--\,\ref{asm:scaling} hold. For any $\tilde{\eta}\geq\gamma\sigma$ and $\tilde{N}>N_{\tilde{\eta}}$ there exists $\hat{\eta}$ such that $\lim_{N\to\infty}\vert r_{N,\eta}^*(k)\vert_{r_\mathrm{d}}=0$ uniformly on $\mathcal{X}_{\tilde{N},\tilde{\eta}}$ for all $k\in\mathbb{N}_0$ and $\eta\geq\hat{\eta}$, with $r_{N,\eta}^*(k)=r_{N,\eta}^*(x_{\mu_{N,\eta}}(k, x))$.
\end{lemma}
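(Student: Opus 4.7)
The plan is to exploit the scaling function $\lambda(N)$, which is driven to infinity by Assumption~\ref{asm:scaling}, together with the uniform value function bound from Lemma~\ref{lem:comparison_value_functions}. The starting observation is that because the tracking cost~\eqref{eq:tracking_cost} is non-negative, the optimal value of~\eqref{eq:MPC_problem} satisfies
\[
\lambda(N)\, T(r_{N,\eta}^*(x)) \le H_{N,\eta}^*(x).
\]
By Lemma~\ref{lem:comparison_value_functions}, for the given $\tilde{\eta}$ and $\tilde{N}$ there exists $\hat{\eta}$, \emph{independent of $N$}, such that $H_{N,\eta}^*(x) \le \hat{\eta}$ for all $x \in \mathcal{X}_{\tilde{N},\tilde{\eta}}$ and $\eta \ge \hat{\eta}$. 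Chaining this with the lower bound from Assumption~\ref{asm:offset_cost_indication} gives
\[
\alpha_{\mathrm{lo}}^T(|r_{N,\eta}^*(x)|_{r_\mathrm{d}}) \le T(r_{N,\eta}^*(x)) \le \hat{\eta}/\lambda(N),
\]
so $|r_{N,\eta}^*(x)|_{r_\mathrm{d}} \le (\alpha_{\mathrm{lo}}^T)^{-1}(\hat{\eta}/\lambda(N))$. Since $(\alpha_{\mathrm{lo}}^T)^{-1}\in\mathcal{K}_\infty$ is continuous at zero and $\lambda(N)\to\infty$, this right-hand side vanishes as $N\to\infty$, and crucially it does not depend on $x$, yielding uniform convergence on $\mathcal{X}_{\tilde{N},\tilde{\eta}}$ for $k=0$.

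To extend the bound to arbitrary $k\in\mathbb{N}_0$, I would invoke recursive feasibility from Theorem~\ref{thm:exponential_stability} and observe that the decrease inequality~\eqref{eq:decrease_H} together with non-negativity of $\ell$ implies that $H_{N,\eta}^*$ is non-increasing along the closed loop. Hence $H_{N,\eta}^*(x_{\mu_{N,\eta}}(k, x)) \le H_{N,\eta}^*(x) \le \hat{\eta}$ for every $k$, and applying the previous step with $x$ replaced by $x_{\mu_{N,\eta}}(k, x)$ produces the same $N$-dependent bound, now uniform in both the initial state and $k$.

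The main obstacle, and the reason Lemma~\ref{lem:comparison_value_functions} is indispensable, is securing a bound on $H_{N,\eta}^*$ that is simultaneously uniform in $x$ and independent of $N$. A naive estimate via feasibility of the candidate $r = r_\mathrm{d}$ would yield $H_{N,\eta}^*(x) \le \mathcal{J}_N^\mathrm{s}(x, r_\mathrm{d}) + \lambda(N) T(r_\mathrm{d})$, which would still depend on $N$. Lemma~\ref{lem:comparison_value_functions} resolves this by bounding $\mathcal{J}_N^\mathrm{s}(x, r_\mathrm{d})$ by a constant $\hat{\eta}$ that depends only on the fixed initial set $\mathcal{X}_{\tilde{N},\tilde{\eta}}$, and $T(r_\mathrm{d}) = 0$ (Assumption~\ref{asm:offset_cost_indication}) eliminates the $\lambda(N)$ contribution. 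Once this uniform bound is in place, the growth of $\lambda(N)$ does the rest of the work, mirroring the argument in~\cite[Lem.~5]{Koehler2023}.
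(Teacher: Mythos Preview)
Your proposal is correct and follows essentially the same approach as the paper's proof: both combine the uniform bound $H_{N,\eta}^*(x)\le\hat{\eta}$ from Lemma~\ref{lem:comparison_value_functions}, the recursive decrease of $H_{N,\eta}^*$ along the closed loop via~\eqref{eq:decrease_H}, non-negativity of the tracking cost, the lower bound $\alpha_{\mathrm{lo}}^T$ from Assumption~\ref{asm:offset_cost_indication}, and the divergence of $\lambda(N)$ from Assumption~\ref{asm:scaling}. The only cosmetic difference is that the paper phrases the argument by contradiction, whereas you give the direct version with the explicit bound $(\alpha_{\mathrm{lo}}^T)^{-1}(\hat{\eta}/\lambda(N))$.
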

\begin{proof}
	Suppose there exists $\psi>0$ such that for all $N\in\mathbb{N}_0$ there exist $k\in\mathbb{N}_0$ and $x\in\mathcal{X}_{\tilde{N},\tilde{\eta}}$ with $\vert r_{N,\eta}^*(k)\vert_{r_\mathrm{d}}\geq\psi$. Then, $T(r_{N,\eta}^*(k))\geq\alpha_\mathrm{lo}^T(\psi)$ by Assumption~\ref{asm:offset_cost_indication}. Also, Lemma~\ref{lem:comparison_value_functions} states that there exists $\hat{\eta}$ with $H_{N,\eta}^*(x)\hspace{-1pt}\leq\hspace{-1pt}\mathcal{J}_N^\mathrm{s}(x, r_\mathrm{d})\hspace{-1pt}\leq\hspace{-1pt}\hat{\eta}$ for all $x\in\mathcal{X}_{\tilde{N},\tilde{\eta}}$, $N\in\mathbb{N}$ and $\eta\geq\hat{\eta}$. Using equation~\eqref{eq:decrease_H} recursively and nonnegative costs yield $\hat{\eta}\hspace{-1pt}\geq\hspace{-1pt} H_{N,\eta}^*(x)\hspace{-1pt}\geq\hspace{-1pt} H_{N,\eta}^*(x_{\mu_{N,\eta}}(k, x))\hspace{-1pt}=\hspace{-1pt}\mathcal{J}_N^\mathrm{s}(x_{\mu_{N,\eta}}(k, x), r_{N,\eta}^*(k))\hspace{-1pt}+\hspace{-1pt}\lambda(N)T(r_{N,\eta}^*(k))\hspace{-1pt}\geq\hspace{-1pt}\lambda(N)\alpha_\mathrm{lo}^T(\psi)$ for $N>N_\eta$. The value $\hat{\eta}$ is fixed, while $\lambda(N)\to\infty$ as $N\to\infty$ by Assumption~\ref{asm:scaling}. This contradiction proves the claim.
\end{proof}

The lemma shows that the used $\eta\geq\hat{\eta}$ has to be large enough such that $r_\mathrm{d}$ can be the artificial reference. Otherwise, the infinitely long open-loop trajectory would get so heavily restricted by the constraint of fulfilling $\mathcal{J}_N^\mathrm{s}(x, r)\leq\eta$, see~\eqref{eq:MPC_problem_costconstraint}, that it only evolves to and around a close reference $r\in\mathbb{S}$, see also Section~\ref{sec:example}. But $\hat{\eta}$ is only a minimal cost bound. Since the value function is bounded by $\hat{\eta}$, cf. Lemma~\ref{lem:comparison_value_functions}, and $N\to\infty$ fulfills the minimum horizon for stability, $\eta\geq\hat{\eta}$ are possible.

Together, the asymptotic performance result follows:

\begin{theorem}\label{thm:asymptotic_performance}
	Let Assumptions~\ref{asm:offset_cost_indication}\,--\,\ref{asm:scaling} hold. Then, for any $\tilde{\eta}\geq\gamma\sigma$ and $\tilde{N}>N_{\tilde{\eta}}$ there exists $\hat{\eta}$ such that for any $x\in\mathcal{X}_{\tilde{N},\tilde{\eta}}$ and $\eta\geq\hat{\eta}$ \begin{equation*}
		\lim_{\substack{N\to\infty\\K\to\infty}}J_K^\mathrm{d}(x, \mu_{N,\eta})=\inf_{u\in\mathbb{U}_{\lbrace x_\mathrm{d}\rbrace}^\infty(x)}J_\infty^\mathrm{d}(x, u).
	\end{equation*}
\end{theorem}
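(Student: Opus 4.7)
My plan is to establish the theorem via matching upper and lower bounds on $\lim J_K^\mathrm{d}(x, \mu_{N, \eta})$, writing $J_\infty^* := \inf_{u \in \mathbb{U}_{\lbrace x_\mathrm{d}\rbrace}^\infty(x)} J_\infty^\mathrm{d}(x, u)$ for the target and taking $\hat\eta$ to be the maximum of the thresholds appearing in Lemmas~\ref{lem:comparison_value_functions} and~\ref{lem:uniform_convergence}.

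For the upper bound I would invoke Theorem~\ref{thm:transient_performance_bound}, giving $J_K^\mathrm{d}(x, \mu_{N,\eta}) \le \tfrac{1}{\alpha_N}(\mathcal{I}_K + \delta(K)) + S_{N,K}$, where $\mathcal{I}_K := \inf_{u \in \mathbb{U}_{\mathcal{B}_\kappa(x_\mathrm{d})}^K(x)} J_K^\mathrm{d}(x, u)$ and $S_{N,K}$ denotes the reference-offset sum from \eqref{eq:transient_performance_bound}. The first step is to bound $\mathcal{I}_K$ by inserting the standard MPC closed-loop input $u_N^\mathrm{cl}$ with respect to $r_\mathrm{d}$ as a feasible competitor: Lemma~\ref{lem:comparison_value_functions} gives $\mathcal{J}_N^\mathrm{s}(x, r_\mathrm{d}) \le \hat\eta \le \eta$, so Proposition~\ref{prop:exponential_stability_standard_mpc} applies and $|x_{u_N^\mathrm{cl}}(K, x)|_{x_\mathrm{d}} \le c_\mathrm{s} |x|_{x_\mathrm{d}} \gamma_\mathrm{s}^K = \kappa$, making $u_N^\mathrm{cl}$ admissible. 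Relaxed dynamic programming along this closed loop (telescoping the standard-MPC analogue of \eqref{eq:decrease_H}) then yields $J_\infty^\mathrm{d}(x, u_N^\mathrm{cl}) \le \mathcal{J}_N^\mathrm{s}(x, r_\mathrm{d})/\alpha_N$, while truncating any $u \in \mathbb{U}_{\lbrace x_\mathrm{d}\rbrace}^\infty(x)$ to $N$ steps gives $\mathcal{J}_N^\mathrm{s}(x, r_\mathrm{d}) \le J_\infty^*$. Chaining these inequalities produces $\mathcal{I}_K \le J_\infty^\mathrm{d}(x, u_N^\mathrm{cl}) \le J_\infty^*/\alpha_N$.

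The hard part will be the vanishing of $S_{N,K}$. Lemma~\ref{lem:uniform_convergence}'s proof provides a bound $|r_{N,\eta}^*(k)|_{r_\mathrm{d}} \le \epsilon_N := (\alpha_\mathrm{lo}^T)^{-1}(\hat\eta/\lambda(N))$, uniform in $k$ and $x \in \mathcal{X}_{\tilde N, \tilde\eta}$, with $\epsilon_N \to 0$; but this alone is insufficient since a $K$-term sum of $\epsilon_N$'s may diverge as $K \to \infty$. I would therefore refine using $\lambda(N)\alpha_\mathrm{lo}^T(|r_{N,\eta}^*(k)|_{r_\mathrm{d}}) \le \lambda(N) T(r_{N,\eta}^*(k)) \le H_{N,\eta}^*(x_{\mu_{N,\eta}}(k,x))$. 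By Lemma~\ref{lem:comparison_value_functions} together with Assumption~\ref{asm:stage_cost_lower_and_upper_bound}, $H_{N,\eta}^*(x_{\mu_{N,\eta}}(k,x)) \le \mathcal{J}_N^\mathrm{s}(x_{\mu_{N,\eta}}(k,x), r_\mathrm{d}) \le \tfrac{\eta}{\sigma} c_2^\ell |x_{\mu_{N,\eta}}(k,x)|_{x_\mathrm{d}}^2$, and Theorem~\ref{thm:exponential_stability} (selecting $N$-uniform versions of $c_\mathrm{exp}, \gamma_\mathrm{exp}$ over $N > N_\eta$, as already done for $\gamma_\mathrm{s}$ in Theorem~\ref{thm:transient_performance_bound}'s proof) supplies an $N$-independent exponentially decaying dominating sequence in $k$. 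Combined with the pointwise convergence $|r_{N,\eta}^*(k)|_{r_\mathrm{d}} \to 0$, a dominated-convergence argument gives $S_{N,\infty} \to 0$ as $N \to \infty$, and because $S_{N,K} \le S_{N,\infty}$ the same holds uniformly in $K$.

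Putting everything together, $\alpha_N \to 1$ and $\delta(K) \to 0$ (with $\delta$ independent of $N$ by Theorem~\ref{thm:transient_performance_bound}'s proof) yield $\limsup_{N,K \to \infty} J_K^\mathrm{d}(x, \mu_{N,\eta}) \le J_\infty^*$. The matching lower bound is immediate: Theorem~\ref{thm:exponential_stability} ensures $\mu_{N,\eta} \in \mathbb{U}_{\lbrace x_\mathrm{d}\rbrace}^\infty(x)$, so monotone convergence of the nonnegative stage costs gives $\lim_{K \to \infty} J_K^\mathrm{d}(x, \mu_{N,\eta}) = J_\infty^\mathrm{d}(x, \mu_{N,\eta}) \ge J_\infty^*$ for each fixed $N > N_\eta$, and hence also in the joint limit. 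The two bounds together establish the claim.
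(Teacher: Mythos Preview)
Your proof follows essentially the same architecture as the paper's: invoke the transient bound of Theorem~\ref{thm:transient_performance_bound}, let each component converge as $N,K\to\infty$, and obtain the matching lower bound from feasibility of $\mu_{N,\eta}$ in $\mathbb{U}^\infty_{\{x_\mathrm{d}\}}(x)$ via Theorem~\ref{thm:exponential_stability}. The paper handles the infimum term more directly, simply letting $\kappa\to 0$ so that $\mathbb{U}^K_{\mathcal{B}_\kappa(x_\mathrm{d})}(x)$ collapses to $\mathbb{U}^\infty_{\{x_\mathrm{d}\}}(x)$; your route of inserting the standard-MPC closed loop as a competitor and chaining $\mathcal{I}_K\le J^*_\infty/\alpha_N$ is a legitimate alternative that sidesteps the (non-trivial) limit-of-infima step. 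For the offset sum $S_{N,K}$ the paper is terser---it cites Lemma~\ref{lem:bounded_error_terms} (summability for each $N$, via exponential stability) and Lemma~\ref{lem:uniform_convergence} (uniform-in-$k$ pointwise convergence) and then interchanges limit and sum---whereas you spell out the dominated-convergence argument.

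There is one technical point to tighten. When you write ``By Lemma~\ref{lem:comparison_value_functions}, $H^*_{N,\eta}(x_{\mu_{N,\eta}}(k,x))\le \mathcal{J}_N^\mathrm{s}(x_{\mu_{N,\eta}}(k,x),r_\mathrm{d})$'', note that Lemma~\ref{lem:comparison_value_functions} is stated for $x\in\mathcal{X}_{\tilde N,\tilde\eta}$, and the closed-loop states $x_{\mu_{N,\eta}}(k,x)$ are only guaranteed to lie in $\mathcal{X}_{N,\eta}\supseteq\mathcal{X}_{\tilde N,\tilde\eta}$, not in $\mathcal{X}_{\tilde N,\tilde\eta}$ itself. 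The fix is easy: once $|x_{\mu_{N,\eta}}(k,x)|_{x_\mathrm{d}}^2\le\sigma/c_2^\ell$ (which happens for all $k\ge k_0$ with an $N$-uniform $k_0$ under your assumption of $N$-uniform $c_{\mathrm{exp}},\gamma_{\mathrm{exp}}$), Assumption~\ref{asm:local_exponential_cost_controllability} directly gives $\mathcal{J}_N^\mathrm{s}(x(k),r_\mathrm{d})\le\gamma c_2^\ell|x(k)|_{x_\mathrm{d}}^2\le\gamma\sigma\le\eta$, so $r_\mathrm{d}$ is feasible and the inequality holds; for the finitely many $k<k_0$ the uniform bound $|r^*_{N,\eta}(k)|_{r_\mathrm{d}}\le\epsilon_N\to 0$ already suffices. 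Both your argument and the paper's ultimately rest on the existence of $N$-uniform exponential-stability constants for the tracking closed loop, which you flag explicitly (``as already done for $\gamma_\mathrm{s}$''); the paper leaves this implicit in its appeal to Lemmas~\ref{lem:bounded_error_terms}--\ref{lem:uniform_convergence}.
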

\vspace*{\belowdisplayskip}
\begin{proof}
	Let $K\to\infty$ and $N\to\infty$ in the transient performance result~\eqref{eq:transient_performance_bound}. Then, $\delta(K)$ vanishes since $\delta\in\mathcal{L}$. Combining the convergence shown in Lemma~\ref{lem:bounded_error_terms} and the uniform convergence shown in Lemma~\ref{lem:uniform_convergence} yields $\lim_{N\to\infty}\sum_{k=0}^{\infty}\left\vert r_{N,\eta}^*(x_{\mu_{N,\eta}}(k))\right\vert_{r_\mathrm{d}}^i= \sum_{k=0}^{\infty}\big(\lim_{N\to\infty}\left\vert r_{N,\eta}^*(x_{\mu_{N,\eta}}(k))\right\vert_{r_\mathrm{d}}^i\big) = 0$ for all $\eta\geq\hat{\eta}$, $x\in\mathcal{X}_{\tilde{N},\tilde{\eta}}$ and $i\in\left\lbrace 1,2\right\rbrace$. Since there exists $\gamma_\mathrm{s}'$ such that $\gamma_\mathrm{s}\leq\gamma_\mathrm{s}'<1$ for all sufficiently large and in particular all stabilizing $N$, this implies $\kappa=c_\mathrm{s}\vert x\vert_{x_\mathrm{d}}\gamma_\mathrm{s}^K\to 0$ for $K\to\infty$. Finally, $\lim_{N\to\infty}\alpha_N=1-\frac{\eta}{\sigma}(\gamma-1)/N=1$. Equality follows from optimality of the infimum, compare $u\in\mathbb{U}_{\lbrace x_\mathrm{d}\rbrace}^\infty(x)$, and stability, i.e.\ $\lim_{K\to\infty}\vert x_{\mu_{N,\eta}}(K, x)\vert_{ x_\mathrm{d}}=0$, of the MPC scheme.
\end{proof}

This theorem shows that in the asymptotic case, the closed-loop cost equals the infinite horizon optimal cost. So in the limit $N\to\infty$ and $K\to\infty$, the closed-loop trajectories of the proposed scheme are optimal among all trajectories leading to the best reachable steady state, recovering the infinite horizon optimal behaviour.

\section{Example: Continuous stirred-tank reactor}\label{sec:example}

We illustrate the effects of $\eta$ and $N$ in a simulation of a continuous stirred-tank reactor taken from~\cite{Mayne2011}. The discrete-time model results from an Euler discretization with a sampling time of \SI{0.1}{\second}. The constraints are given by $\mathbb{Z} = (\left[ 0,\,1\right]\times\left[ 0,\,1\right])\times\left[ 0,\,2\right]$ and the references have to lie in $\mathbb{Z}_r = (\left[0.0529,\,0.9492\right]\times\left[0.43,\,0.86\right])\times\left[0.1366,\,0.7687\right]$. Starting from $x_0=\begin{bmatrix}
	0.9492&0.43
\end{bmatrix}^\top$, the system should be steered to the external reference $x_\mathrm{e}=x_\mathrm{d}=\begin{bmatrix}
	0.2632&0.6519
\end{bmatrix}^\top$ with $u_\mathrm{e}=u_\mathrm{d}=0.7585$, which in this case is equal to the best reachable reference. We choose a quadratic stage cost with $Q=I$, $R=0$, the offset cost $T(r)=0.01\vert x_{r,1}\vert_{x_{\mathrm{d},1}}^2+1000\vert x_{r,2}\vert_{x_{\mathrm{d},2}}^2+\vert u_r\vert_{u_\mathrm{d}}^2$ and the scaling function $\lambda(N)=N+1$. The simulation was implemented using~\cite{Andersson2019,Waechter2006}. We did not check Assumption~\ref{asm:local_exponential_cost_controllability} or the requirements of Theorems~\ref{thm:exponential_stability}--\ref{thm:asymptotic_performance}, but, as in \cite{Soloperto2023}, chose $N$ and $\eta$ by trial and error. Modifying $Q$ and $R$ affects both the required values of $\eta$ and $N$ for stability, as well as the specified performance measure~\eqref{eq:performance_measure}.

For $\eta=0.5$, see the top of Figure~\ref{fig:CSTR}, the system shows stable behaviour for small horizons $N$ in contrast to standard MPC without terminal constraints ($N\geq 30$), which also benefits computational complexity. Even though the performance gets closer to the infinite horizon performance for increasing $N$, it decreases again when increasing $N$ further, see the top of Figure~\ref{fig:CSTR}. A too small $\eta$ for a large $N$ forces the open-loop solution towards an artificial steady state close to the current state in order to reduce costs, cf. Lemma~\ref{lem:uniform_convergence}, which leads to suboptimal behaviour. Thus, \eqref{eq:MPC_problem_costconstraint} can play a similar role as stabilizing terminal constraints in certain situations. In contrast, when choosing a larger $\eta=10$\footnote{Note that this is several times smaller than the infinite horizon optimal closed-loop cost $J_\infty^\mathrm{d}(x_0, u_\infty^\mathrm{s}(0\vert t))\approx 208.9749$.}, for rising $N$, the infinite horizon optimal solution is attained, see the bottom of Figure~\ref{fig:CSTR}, as expected from Theorem~\ref{thm:asymptotic_performance}. Also, MPC for tracking for $\lambda(N)\equiv 1$ and $N=1000$ is shown for comparison.

\begin{figure}
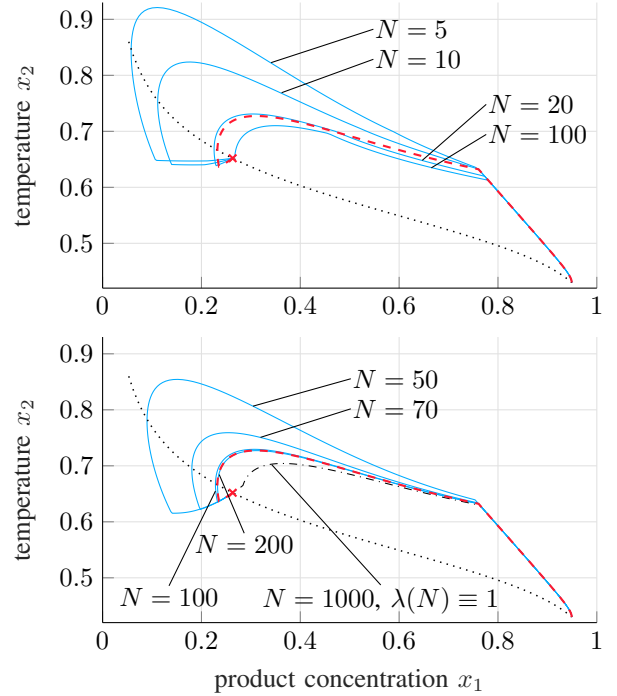

	\centering
	\vspace{5pt}
	\newlength{\figurewidth}
	\setlength\figurewidth{0.8\columnwidth}
	\input{CSTR_eta05.tex}\\
	\vspace{5pt}
	\input{CSTR_eta10.tex}
	\caption{Closed-loop trajectories (blue) from application of MPC for tracking for $\eta=0.5$ (top), $\eta=10$ (bottom) and varying prediction horizons $N$. The dotted black line indicates possible artificial steady states, while the infinite horizon optimal solution is dashed in red.}
	\label{fig:CSTR}
	\vspace{-12pt}
\end{figure}

\section{Conclusion}

We developed a transient performance estimate for an MPC for tracking scheme without terminal cost and constraints. Our analysis showed that for good performance, $\eta$ needs to be sufficiently large for $r_\mathrm{d}$ to be used as the artificial reference. The system then has enough flexibility to perform well, as seen in the simulation. For sufficiently large $\eta$, larger $N$ lead to better performance. However, $\eta$ should not be chosen excessively large, as this may conflict with practical horizon limitations. It would be interesting to find a relation between them in order to design only one parameter.

\appendix
\subsection{Proof of Proposition~\ref{prop:performance_bound_standard_mpc}}\label{appendix:proof_standard_performance}

Use $K_\kappa\hspace{-2pt}=\hspace{-2pt}\min\lbrace K\hspace{-2pt}\in\hspace{-2pt}\mathbb{N}\hspace{1pt}\vert\hspace{2pt}\kappa\hspace{-2pt}=\hspace{-1pt}c_\mathrm{s}\vert x\vert_{x_\mathrm{d}}\hspace{-1pt}\gamma_\mathrm{s}^K\hspace{-2pt}\leq\hspace{-4pt}\sqrt{\frac{\sigma}{c_2^\ell}}\rbrace$. For $K<K_\kappa$ and $K\leq N$, \eqref{eq:performance_bound_standard_mpc} can always be achieved by choosing $\delta(K)$ sufficiently large, since all functions~are bounded as $0\leq \mathcal{J}_N^\mathrm{s}(x, r_\mathrm{d})\leq \eta$ and $0\leq J_K^\mathrm{d}(x, u)$ for all $u$.
	
In the case that $K\geq K_\kappa$ but still $K\leq N$, consider some $\bar{u}\in\mathbb{U}_{\mathcal{B}_\kappa(x_\mathrm{d})}^K(x)$. From \eqref{eq:stage_cost_lower_and_upper_bound} follows $\ell^*(x_{\bar{u}}(K, x), r_\mathrm{d})\leq c_2^\ell\vert x_{\bar{u}}(K, x)\vert_{x_\mathrm{d}}^2 \leq c_2^\ell \kappa^2\leq c_2^\ell \frac{\sigma}{c_2^\ell}=\sigma$, hence Assumption~\ref{asm:local_exponential_cost_controllability} yields $\mathcal{J}_{N-K}^\mathrm{s}(x_{\bar{u}}(K, x), r_\mathrm{d})\hspace{-1pt}\leq\hspace{-1pt}\gamma\ell^*(x_{\bar{u}}(K, x), r_\mathrm{d})\hspace{-1pt}\leq\hspace{-1pt}\gamma c_2^\ell\vert x_{\bar{u}}(K, x)\vert_{x_\mathrm{d}}^2\leq\gamma c_2^\ell\kappa^2=\gamma c_2^\ell c_\mathrm{s}^2\vert x\vert_{x_\mathrm{d}}^2\gamma_\mathrm{s}^{2K}$. Thus, $x_{\bar{u}}(K, x)\in\mathcal{X}_{N-K}^\mathrm{s}$ and $\bar{u}\in\mathbb{U}_{\mathcal{X}_{N-K}^\mathrm{s}}^K(x)$. Utilizing the dynamic programming principle~\cite[Thm.~3.15]{Gruene2017} gives for $x\in\mathcal{X}_N^\mathrm{s}$, which is satisfied because of $x$ with $\mathcal{J}_N^\mathrm{s}(x, r_\mathrm{d})\leq\eta$ since the value function maps infeasible $x$ to infinity, for all $N\in\mathbb{N}$ and $K\in\mathbb{I}_{1:N}$ with $x_u(k)=x_u(k, x)$,\\
$\mathcal{J}_N^\mathrm{s}(x, r_\mathrm{d}) =\hspace{-2pt} \inf_{u(\cdot)\in\mathbb{U}_{\mathcal{X}_{N-K}^\mathrm{s}}^K\hspace{-2pt}(x)}\hspace{-1pt}\big\lbrace \sum_{k=0}^{K-1}\ell(x_u(k), u(k), r_\mathrm{d})\\
+\mathcal{J}_{N-K}^\mathrm{s}(x_u(K), r_\mathrm{d}) \big\rbrace
\leq J_K(x, \bar{u}, r_\mathrm{d})+\mathcal{J}_{N-K}^\mathrm{s}(x_{\bar{u}}(K), r_\mathrm{d})$. Take $\bar{u}=\arginf_{u\in\mathbb{U}_{\mathcal{B}_\kappa(x_\mathrm{d})}^K(x)}J_K(x, u, r_\mathrm{d})$ since $\bar{u}$ was arbitrary. This results in \eqref{eq:performance_bound_standard_mpc} with $\delta(K)=\gamma c_2^\ell c_\mathrm{s}^2\vert x\vert_{x_\mathrm{d}}^2\gamma_\mathrm{s}^{2K}$. Since $\gamma_\mathrm{s}\in(0, 1)$ and $\delta(K)$ is continuous, $\delta(K)\in\mathcal{L}$.
	
In the case that $K>N$ and either $K<K_\kappa$ or $K\geq K_\kappa$, we have $J_K^\mathrm{d}(x, u)=J_K(x, u, r_\mathrm{d})\geq \mathcal{J}_N^\mathrm{s}(x, r_\mathrm{d})$ for all $u\in\mathbb{U}_{\mathcal{B}_\kappa(x_\mathrm{d})}^K(x)$, since $\ell(x, u, r_\mathrm{d})\geq 0$ for $(x, u)\in\mathbb{Z}$, so in total the claimed inequality is satisfied for all $K\in\mathbb{N}$.

\subsection{Proof of Lemma~\ref{lem:comparison_value_functions}}\label{appendix:proof_comparison_value_functions}

Case 1: $\mathcal{J}_{\tilde{N}}^\mathrm{s}(x, r_\mathrm{d})=J_{\tilde{N}}(x, u_{\tilde{N}}^\mathrm{s}, r_\mathrm{d})\leq\tilde{\eta}$ for all $x\in\mathcal{X}_{\tilde{N},\tilde{\eta}}$. Choose $r_\mathrm{d}$ and $u_{\tilde{N}}^\mathrm{s}$ as a candidate for~\eqref{eq:MPC_problem}. From~\eqref{eq:ell_star_lower_sigma} follows that there exists $N'\in\mathbb{I}_{1:\tilde{N}-1}$ such that $\ell^*(x_\mathrm{end}, r_\mathrm{d})<\sigma$ with $x_\mathrm{end}=x_{u_{\tilde{N}}^\mathrm{s}}(N', x)$. Name $\eta'=\tilde{\eta}$ and $\hat{u}_1=\big[u_{\tilde{N}}^\mathrm{s}(0\vert 0), \dots, u_{\tilde{N}}^\mathrm{s}(N'-1\vert 0)\big]$.
	
Case 2: $\mathcal{J}_{\tilde{N}}^\mathrm{s}(x, r_\mathrm{d})>\tilde{\eta}$ for $x\in\mathcal{X}_{\tilde{N},\tilde{\eta}}$. Then, there is $r_1\neq r_\mathrm{d}$ with $H_{\tilde{N},\tilde{\eta}}^*(x)=\mathcal{J}_{\tilde{N}}^\mathrm{s}(x, r_1)+\lambda(\tilde{N})T(r_1)$ and $\mathcal{J}_{\tilde{N}}^\mathrm{s}(x, r_1)\leq\tilde{\eta}$. From~\eqref{eq:ell_star_lower_sigma} follows that there is $\tau_1\in\mathbb{I}_{1:\tilde{N}-1}$ with $\ell^*(x_{\tau_1}, r_1)<\sigma$ for $x_{\tau_1}=x_{u_{\tilde{N}}^\mathrm{s}}(\tau_1, x)$. Next, uniform convergence to the neighbourhood around $r_\mathrm{d}$ is shown, which is similar to \cite[Lem.~3]{Koehler2023}, but for the case without terminal ingredients. In this part, we use $\nu_{N,r}(x(t))=u_N^\mathrm{s}(0\vert t)$, where $u_N^\mathrm{s}$ is the solution of~\eqref{eq:standard_mpc_problem} given $x$, $r$ and $N$. When considering the standard MPC problem~\eqref{eq:standard_mpc_problem} with a fixed $\tilde{N}$ and general $r$, an initial state $x$ in the neighbourhood $\ell^*(x, r)\leq\sigma$ implies $\vert x\vert_{x_r}^2\leq\frac{\sigma}{c_1^\ell}$ by Assumption~\ref{asm:stage_cost_lower_and_upper_bound}. Also, recall that $c_\mathrm{s}$ and $\gamma_\mathrm{s}$ in \eqref{eq:exponential_stability_standard_mpc} can be chosen independently of the reference. Thus, there is $\tau_2$ such that $\vert x_{\nu_{\tilde{N},r}}(\tau_2, x)\vert_{x_r}\leq c_\mathrm{s}\sqrt{\frac{\sigma}{c_1^\ell}}\gamma_\mathrm{s}^{\tau_2}\leq\sqrt{\frac{\sigma}{4c_2^\ell}}$, i.e.\ after following the standard MPC feedback law for $\tau_2$ steps the state is in a region even closer around $r$. Utilize this for $r_1$ first and take $\delta_r=\sup_{\hat{r},\tilde{r}\in\mathbb{S}}\vert\hat{r}\vert_{\tilde{r}}$. By Assumption~\ref{asm:better_candidate_reference} there exists $r_2$ with $\vert x_{r_1}\vert_{x_{r_2}}^2\leq \vert r_1\vert_{r_2}^2\leq(c_1^r)^2\theta^2\vert r_2\vert_{r_\mathrm{d}}^2\leq (c_1^r)^2\theta^2\delta_r^2\leq\frac{\sigma}{4c_2^\ell}$ when choosing $0<\theta\leq\sqrt{\frac{\sigma}{4c_2^\ell}}\frac{1}{c_1^r\delta_r}<1$. With \eqref{eq:stage_cost_lower_and_upper_bound} follows $\ell^*(x_{\nu_{\tilde{N},r_1}}(\tau_2, x_{\tau_1}), r_2)\leq c_2^\ell\vert x_{\nu_{\tilde{N},r_1}}(\tau_2, x_{\tau_1})\vert_{x_{r_2}}^2\leq 2c_2^\ell(\vert x_{\nu_{\tilde{N},r_1}}(\tau_2, x_{\tau_1})\vert_{x_{r_1}}^2+\vert x_{r_1}\vert_{x_{r_2}}^2)\leq\sigma$, and Assumption~\ref{asm:local_exponential_cost_controllability} yields $\mathcal{J}_{\tilde{N}}^\mathrm{s}(x_{\nu_{\tilde{N},r_1}}(\tau_2, x_{\tau_1}), r_2)\leq\gamma\ell^*(x_{\nu_{\tilde{N},r_1}}(\tau_2, x_{\tau_1}), r_2)\leq\gamma\sigma\leq \tilde{\eta}$. Hence $x_{\nu_{\tilde{N},r_1}}(\tau_2, x_{\tau_1})\in\mathcal{X}_{\tilde{N},\tilde{\eta}}$ and $\in\mathcal{X}_{\tilde{N}}^\mathrm{s}$. Take $j=1$. If $\vert r_j\vert_{r_\mathrm{d}}^2\leq\frac{\sigma}{4c_2^\ell}$, we are finished with $r_\mathrm{d}=r_{j+1}$ and $j+1=c_\tau$. Otherwise, if $\vert r_j\vert_{r_\mathrm{d}}^2>\frac{\sigma}{4c_2^\ell}$, follow $\nu_{\tilde{N},r_{j+1}}$ for $\tau_2$ steps and repeat these steps until $\vert r_j\vert_{r_\mathrm{d}}^2\leq\frac{\sigma}{4c_2^\ell}$ is satisfied. Then, we can concatenate the inputs $\nu_{\tilde{N},r_j}$ for $j\in\mathbb{I}_{1:c_\tau-1}$ to get from the initial neighbourhood into the neighbourhood around $r_\mathrm{d}$. The minimum required number of artificial references $c_\tau$ results from applying Assumption~\ref{asm:better_candidate_reference} recursively $T(r_j)<T(r_1)-(j-1)c_2^r\theta\frac{\sigma}{4c_2^\ell}$ and evaluating it for $c_\tau-2$ together with Assumption~\ref{asm:offset_cost_indication}: $\alpha_\mathrm{lo}^T\big(\sqrt{\frac{\sigma}{4c_2^\ell}}\,\big)<\alpha_\mathrm{lo}^T(\vert r_{c_\tau-2}\vert_{r_\mathrm{d}})\leq T(r_{c_\tau-2})<T(r_1)-(c_\tau-3)c_2^r\theta\frac{\sigma}{4c_2^\ell}\leq\alpha_\mathrm{up}^T(\vert r_1\vert_{r_\mathrm{d}})-(c_\tau-3)c_2^r\theta\frac{\sigma}{4c_2^\ell}\leq\alpha_\mathrm{up}^T(\delta_r)-(c_\tau-3)c_2^r\theta\frac{\sigma}{4c_2^\ell}$. It is given by $c_\tau \geq 3+\frac{\alpha_\mathrm{up}^T(\delta_r)-\alpha_\mathrm{lo}^T(\sqrt{\frac{\sigma}{4c_2^\ell}})}{c_2^r\theta\frac{\sigma}{4c_2^\ell}}$. In total, the candidate $\hat{u}_1$ consists of parts of $u_{\tilde{N}}^\mathrm{s}$ and $\nu_{\tilde{N},r_1}, \dots, \nu_{\tilde{N},r_{c_\tau-1}}$, has length $N'=\tau_1+(c_\tau-1)\tau_2$ and transfers any $x\in\mathcal{X}_{\tilde{N},\tilde{\eta}}$ into the neighbourhood $\ell^*(x_\mathrm{end}, r_\mathrm{d})\leq\sigma$ with $x_\mathrm{end}=x_{\hat{u}_1}(N', x)$. Moreover, the associated cost is smaller than or equal to $\eta'=\max_{x\in\mathcal{X}_{\tilde{N},\tilde{\eta}}\subseteq X}J_{N'}(x, \hat{u}_1, r_\mathrm{d})$. This value is an overestimation because the candidate trajectory moves through the neighbourhoods along the steady state manifold, which may not be the optimal behaviour.
	
In both cases, Assumption~\ref{asm:local_exponential_cost_controllability} implies that there is an arbitrarily long $\hat{u}_2=u_{\hat{N}}^\mathrm{s}(\cdot\vert N')$ of \eqref{eq:standard_mpc_problem} with $r=r_\mathrm{d}$ and starting from $x=x_\mathrm{end}$ without increasing the cost beyond a fixed constant $\mathcal{J}_{\hat{N}}^\mathrm{s}(x_\mathrm{end}, r_\mathrm{d})=\mathcal{J}_{\hat{N}}^\mathrm{s}(x_\mathrm{end}, \hat{u}_2, r_\mathrm{d}) \leq\gamma\ell^*(x_\mathrm{end}, r_\mathrm{d})\leq\gamma\sigma$. Combined, there is $\hat{\eta}$ such that $\mathcal{J}_N^\mathrm{s}(x, r_\mathrm{d})\leq\hat{\eta}$ for all $N\in\mathbb{N}$ and $x\in\mathcal{X}_{\tilde{N},\tilde{\eta}}$, when choosing $\hat{\eta}=\eta'+\gamma\sigma$. For $N>N'$ this holds, because with the constructed candidate $\hat{u}=\left[\hat{u}_1,\hat{u}_2\right]$ the value function satisfies $\mathcal{J}_N^\mathrm{s}(x, r_\mathrm{d})\leq J_N(x, \hat{u}, r_\mathrm{d})=J_{N'}(x, \hat{u}_1, r_\mathrm{d})+J_{N-N'}(x_\mathrm{end}, \hat{u}_2, r_\mathrm{d})\leq \eta'+\gamma\sigma=\hat{\eta}$. For $N\leq N'$, the inequality $\mathcal{J}_N^\mathrm{s}(x, r_\mathrm{d})\leq \mathcal{J}_{N'}^\mathrm{s}(x, r_\mathrm{d})\leq J_{N'}(x, \hat{u}_1, r_\mathrm{d})\leq\eta'<\hat{\eta}$ holds due to nonnegative stage costs and the candidate $\hat{u}_1$. This fulfills constraint~\eqref{eq:MPC_problem_costconstraint} for $\eta\geq\hat{\eta}$. Thus, the solution of the standard MPC problem~\eqref{eq:standard_mpc_problem} w.r.t. $r_\mathrm{d}$ is a candidate in~\eqref{eq:MPC_problem} with modified parameters $N\in\mathbb{N}$ and $\eta\geq\hat{\eta}$, for $x\in\mathcal{X}_{\tilde{N},\tilde{\eta}}$, and with $T(r_\mathrm{d})=0$, i.e.\ $H_{N,\eta}^*(x)\leq \mathcal{J}_N^\mathrm{s}(x, r_\mathrm{d})+\lambda(N)T(r_\mathrm{d}) = \mathcal{J}_N^\mathrm{s}(x, r_\mathrm{d})\leq \hat{\eta}$. Moreover, \eqref{eq:MPC_problem} and \eqref{eq:standard_mpc_problem} remain feasible for the modified parameters, hence $\mathcal{X}_{\tilde{N},\tilde{\eta}}\subseteq\mathcal{X}_{N,\eta}$ and $\mathcal{X}_{\tilde{N},\tilde{\eta}}\subseteq\mathcal{X}_N^\mathrm{s}$.

\bibliographystyle{IEEEtran}
\bibliography{literature}

\end{document}